\font\black=cmbx10 \font\sblack=cmbx7 \font\ssblack=cmbx5 \font\blackital=cmmib10  \skewchar\blackital='177
\font\sblackital=cmmib7 \skewchar\sblackital='177 \font\ssblackital=cmmib5 \skewchar\ssblackital='177
\font\sanss=cmss11 \font\ssanss=cmss8 %scaled 900
\font\sssanss=cmss8 scaled 600 \font\blackboard=msbm10 \font\sblackboard=msbm7 \font\ssblackboard=msbm5
\font\caligr=eusm10 \font\scaligr=eusm7 \font\sscaligr=eusm5  \font\fraktur=eufm10
\font\sfraktur=eufm7 \font\ssfraktur=eufm5
\font\bsymb=cmsy10 scaled\magstep2
\def\all#1{\setbox0=\hbox{\lower1.5pt\hbox{\bsymb
       \char"38}}\setbox1=\hbox{$_{#1}$} \box0\lower2pt\box1\;}
\def\exi#1{\setbox0=\hbox{\lower1.5pt\hbox{\bsymb \char"39}}
       \setbox1=\hbox{$_{#1}$} \box0\lower2pt\box1\;}
\def\tx#1{{\fam0\relax#1}}
\def\sss#1{{\fam\ssfam\relax#1}}
\def\hpb#1{\setbox0=\hbox{${#1}$}
    \copy0 \kern-\wd0 \kern.2pt \box0}
\def\vpb#1{\setbox0=\hbox{${#1}$}
    \copy0 \kern-\wd0 \raise.08pt \box0}
\def\pmb#1{\setbox0\hbox{${#1}$} \copy0 \kern-\wd0 \kern.2pt \box0}
\def\pmbb#1{\setbox0\hbox{${#1}$} \copy0 \kern-\wd0
      \kern.2pt \copy0 \kern-\wd0 \kern.2pt \box0}
\def\pmbbb#1{\setbox0\hbox{${#1}$} \copy0 \kern-\wd0
      \kern.2pt \copy0 \kern-\wd0 \kern.2pt
    \copy0 \kern-\wd0 \kern.2pt \box0}
\def\pmxb#1{\setbox0\hbox{${#1}$} \copy0 \kern-\wd0
      \kern.2pt \copy0 \kern-\wd0 \kern.2pt
      \copy0 \kern-\wd0 \kern.2pt \copy0 \kern-\wd0 \kern.2pt \box0}
\def\pmxbb#1{\setbox0\hbox{${#1}$} \copy0 \kern-\wd0 \kern.2pt
      \copy0 \kern-\wd0 \kern.2pt
      \copy0 \kern-\wd0 \kern.2pt \copy0 \kern-\wd0 \kern.2pt
      \copy0 \kern-\wd0 \kern.2pt \box0}
\mathchardef\za="710B  %\alpha
\mathchardef\zb="710C  %\beta
\mathchardef\zg="710D  %\gamma
\mathchardef\zd="710E  %\delta
\mathchardef\zve="710F %\epsilon
\mathchardef\zz="7110  %\zeta
\mathchardef\zh="7111  %\eta
\mathchardef\zvy="7112 %\theta
\mathchardef\zi="7113  %\iota
\mathchardef\zk="7114  %\kappa
\mathchardef\zl="7115  %\lambda
\mathchardef\zm="7116  %\mu
\mathchardef\zn="7117  %\nu
\mathchardef\zx="7118  %\xi
\mathchardef\zp="7119  %\pi
\mathchardef\zr="711A  %\rho
\mathchardef\zs="711B  %\sigma
\mathchardef\zt="711C  %\tau
\mathchardef\zu="711D  %\upsilon
\mathchardef\zvf="711E %\phi
\mathchardef\zq="711F  %\chi
\mathchardef\zc="7120  %\psi
\mathchardef\zw="7121  %\omega
\mathchardef\ze="7122  %\varepsilon
\mathchardef\zy="7123  %\vartheta
\mathchardef\zf="7124  %\varomega
\mathchardef\zvr="7125 %\varrho
\mathchardef\zvs="7126 %\varsigma
\mathchardef\zf="7127  %\varphi
\mathchardef\zG="7000  %\Gamma
\mathchardef\zD="7001  %\Delta
\mathchardef\zY="7002  %\Theta
\mathchardef\zL="7003  %\Lambda
\mathchardef\zX="7004  %\Xi
\mathchardef\zP="7005  %\Pi
\mathchardef\zS="7006  %\Sigma
\mathchardef\zU="7007  %\Upsilon
\mathchardef\zF="7008  %\Phi
\mathchardef\zW="700A  %\Omega
\newcommand{\be}{\begin{equation}}
\newcommand{\ee}{\end{equation}}
\newcommand{\bea}{\begin{eqnarray}}
\newcommand{\eea}{\end{eqnarray}}
\newcommand{\beas}{\begin{eqnarray*}}
\newcommand{\eeas}{\end{eqnarray*}}
\def\*{{\textstyle *}}
\newcommand{\R}{{\mathbb R}}
\newcommand{\C}{{\mathbb C}}
\newcommand{\ot}{\otimes}
\newcommand{\rmp}{\textnormal{p}}
\newcommand{\pa}{\partial}
\newcommand{\ti}{\times}
\newcommand{\cG}{{\mathcal G}}
\def\cH{{\cal H}}
\def\Sec{\operatorname{Sec}}
\def\Ad{\operatorname{Ad}}
\def\bY{{\mathbf{Y}}}
\def\bX{{\mathbf{X}}}
\def\bZ{{\mathbf{Z}}}
\def\sT{{\sss T}}
\def\sv{{\sss v}}
\def\xd{\tx{d}}
\def\xi{\tx{i}}
\def\SO{\operatorname{SO}}
\def\so{\operatorname{so}}
\newcommand{\tr}{\mbox{$\mathrm{tr}$}}
\newcommand{\ket}[1]{\ensuremath{| #1\rangle}}
\newcommand{\bk}[2]{\ensuremath{\langle #1 | #2\rangle}}
\newcommand{\kb}[2]{\ensuremath{| #1\rangle\!\langle #2 |}}
\newcommand{\re}{{\mathcal{R}e}}
\newcommand{\id}{\mathbbmss{1}}
\newcommand{\rmb}{\textnormal{b}}
\newtheorem{theorem}{Theorem}[section]
\newtheorem{proposition}[theorem]{Proposition}%[section]
\newtheorem{lemma}[theorem]{Lemma}%[section]
\newtheorem{corollary}[theorem]{Corollary}%[section]
\theoremstyle{definition}
\newtheorem{example}[theorem]{Example}%[section]
\newtheorem{remark}[theorem]{Remark}%[section]
\begin{document}

\title{Lie groupoids in information geometry}
\author{Katarzyna Grabowska\footnote{email:konieczn@fuw.edu.pl }\\
\textit{Faculty of Physics,
                University of Warsaw}
\\
\\
Janusz Grabowski\footnote{email: jagrab@impan.pl} \\
\textit{Institute of Mathematics, Polish Academy of Sciences} \\
\\
Marek Ku\'s\footnote{email: marek.kus@cft.edu.pl}\\
\textit{Center for Theoretical Physics, Polish Academy of Sciences}
 \\
\\
Giuseppe Marmo\footnote{email: marmo@na.infn.it}\\
\textit{Dipartimento di Fisica ``Ettore Pancini'', Universit\`{a} ``Federico II'' di Napoli} \\
\textit{and Istituto Nazionale di Fisica Nucleare, Sezione di Napoli} \\ \\
}
\date{}
\maketitle
\begin{abstract}
\textit{We demonstrate that the proper general setting for contrast (potential) functions in statistical and information geometry is the one provided by Lie groupoids and Lie algebroids. The contrast functions are defined on Lie groupoids and give rise to two-forms and three-forms on the corresponding Lie algebroid. If the two-form is non-degenerate, it defines a `pseudo-Riemannian' metric on the Lie algebroid and a family of Lie algebroid torsion-free connections, including the Levi-Civita connection of the metric. In this framework, the two-point functions
are just functions on the pair groupoid $M\ti M$ with the `standard' metric and affine connection on the Lie algebroid $\sT M$. We study also reductions of such systems and infinite-dimensional examples. In particular, we find a contrast function defining the Fubini-Study metric on the Hilbert projective space.}
\end{abstract}

\section{Introduction}
Information geometry studies statistical-probabilistic models equipped with a differential structure. It started with the pioneering work of Rao \cite{Rao1945} and was brought to a mature level with works of Amari
\cite{Amari2007, Amari2016,Amari2012} and many others \cite{Barndorff1987,Barndorff1997,Blesild1991}.

The main consideration is that when dealing with probability distributions, two aspects play a relevant role. The first one is a notion of ``distinguishability'' or ``distance'' measuring the relative difference between two probability distributions, the second one is connected with the possibility to compose distributions by means of convex combinations, i.e., treating them as elements of a convex set of an affine space. Information geometry aims at treating these aspects from the point of view of differential geometry, where the notion of distance will be associated with a metric by means of geodesic distances, while the composition will be associated with a connection. In general the connection will not be the Levi-Civita one and, in many instances, it turns out to have curvature different form zero; therefore one looks for alternative connections, which are (possibly) flat and would allow for a notion of a ``convex'' composition.
This is equivalent to the existence of an affine coordinate system.

Our aim in this note is to convince the reader that the proper general setting for contrast (potential) functions in statistical and information geometry is the one provided by Lie groupoids and Lie algebroids. The contrast functions are defined on Lie groupoids (in the standard case the pair groupoid $M\ti M$) and vanish on the submanifold of units (the diagonal in the standard case). The corresponding statistical manifolds are given by a two-form and a three-form on the corresponding Lie algebroid. If the two-form is non-degenerate, it defines a `pseudo-Riemannian' metric on the Lie algebroid and a family of Lie algebroid torsion-free connections, including the Levi-Civita connection of the metric. The three-tensor controls the deviation for the Levi-Civita connection. In this framework, the two-point functions are just functions on the pair groupoid $M\ti M$ with the `standard' metric and affine connection on the Lie algebroid $\sT M$. Naturally understood reductions can lead from the `two-point function case' to contrast function on more complicated manifolds (see Example \ref{e1}).

This generalization is not only completely natural, but offers a wide field of new applications and examples.
Our approach is coordinate-free, so that the corresponding differential calculus can be done in the framework of Banach manifolds. In particular, we study a contrast function defining the Fubini-Study metric on the Hilbert projective space in an infinite dimension.

\section{Information geometry}
As it was written above, the main underlying idea of \emph{information geometry} is to give and analyze a geometric structure of the set of probability distributions pertinent to the problem in question \cite{Amari2007,Amari2016,Ciaglia2017}. To this end one introduces a \emph{statistical manifold}, i.e., a triple $(\mathcal{M}, g, T)$, where $\mathcal{M}$ is a differential manifold parameterizing a family of probability distribution, $g$ is a metric tensor on $\mathcal{M}$, and $T$ is a third order \emph{skewness tensor} on $\mathcal{M}$ characterizing its \emph{flatness}.

Let us first start with the classical context. Let $\mathcal{P}(\mathcal{X})$ denote the space of probability distributions on a measure space $\mathcal{X}$ with a measure $d\mathbf{x}$. The statistical manifold $\mathcal{M}$ gives a parameterization of (a submanifold of) $\mathcal{P}(\mathcal{X})$ by an injective map $\mathcal{M}\ni m\mapsto p(\mathbf{x},m)d\mathbf{x}\in \mathcal{P}(\mathcal{X})$. In a coordinate system $\left\lbrace \zeta^j\right\rbrace $ the Fisher-Rao metric (the tensors $g$) and the skewness tensor $T$ take the form
\begin{equation}\label{eq:g}
g_{jk}  = \int_\mathcal{X}
p(\mathbf{x},\mathbf{\zeta})
\left( {\frac{{\partial \log \left(p(\mathbf{x},\mathbf{\zeta}) \right)}}{{\partial \zeta ^j }}}\right)
\left( {\frac{{\partial \log \left(p(\mathbf{x},\mathbf{\zeta}) \right)}}{{\partial \zeta ^k }}} \right)d\mathbf{x}
\end{equation}
\begin{equation}\label{eq:T}
T_{jkl}  = \int_\mathcal{X}
p(\mathbf{x},\mathbf{\zeta})
\left( {\frac{{\partial \log \left(p(\mathbf{x},\mathbf{\zeta}) \right)}}{{\partial \zeta ^j }}}\right)
\left( {\frac{{\partial \log \left(p(\mathbf{x},\mathbf{\zeta}) \right)}}{{\partial \zeta ^k }}} \right)
\left( {\frac{{\partial \log \left(p(\mathbf{x},\mathbf{\zeta}) \right)}}{{\partial \zeta ^k }}} \right)
d\mathbf{x}
\end{equation}
For a given tensors $g$ and $T$ we define a family of torsionless connections $\nabla^\alpha$ by its Christoffel symbols
\begin{equation}\label{eq:nablaalpha}
\Gamma^\alpha_{jkl}:=\Gamma^{\mathrm{LC}}_{jkl}-\frac{\alpha}{2}T_{jkl},
\end{equation}
where $\Gamma^\mathrm{LC}_{jkl}$ are the Christofell symbols for the metric tensor $g$ (the Levi-Civita connection).
	
For all vector fields $X,Y,Z$ on $\mathcal{M}$ we have a duality property
\begin{equation}\label{eq:dual}
Z\left(g(X,Y)\right)= g\left(\nabla^\alpha_ZX,Y\right) +g\left(X, \nabla^{-\alpha}_Z Y \right).
\end{equation}	
A torsionless connection is self-dual if $\nabla^\alpha=\nabla^{-\alpha}$, what implies $T=0$ and, consequently, identifies the Levi-Civita connection as the only torsionless one that is self-dual.

An alternative characterization of the geometric properties of the statistical structure on $\mathcal{M}$ can be given by introducing two torsionless connections $\nabla=\nabla^1$ and $\nabla^\ast=\nabla^{-1}$ in terms of which $T_{jkl}=\Gamma^\ast_{jkl}-\Gamma_{jkl}$. If both $\nabla$ and $\nabla^\ast$ are flat the statistical manifold $(\mathcal{M}, g, T)= (\mathcal{M}, g, \nabla, \nabla^\ast)$ is called \emph{dually flat} \cite{Amari2007,Amari2016,Ay2002}. It is to note that the space of pure states of a finite-level quantum system treated as a statistical manifold does not admit a dually flat structure \cite{Ay2002, Ay2003}.

The above outlined geometrical structure of a statistical manifold can be generalized in the following different way \cite{Amari2007, Amari2016}.  We introduce a two-point \emph{potential function} $F:\mathcal{M}\times \mathcal{M}\rightarrow \mathbb{R}$ which is usually a ``directed'' distance quantifying the relative distinguishability of two probability distributions \cite{Ciaglia2017} and thus often goes under the name of \emph{contrast function} or \emph{divergence}. For all $m_1,m_2\in\mathcal{M}$ the potential function is non-negative, $F(m_1,m_2)\ge 0$, and vanishes exactly on the diagonal, i.e.,  $F(m_1,m_2)=0$, if and only if $m_1=m_2$. Let, as above, $\left\lbrace \zeta^j\right\rbrace $ be a coordinate system on the first manifold $\mathcal{M}$ and $\left\lbrace \zx^j\right\rbrace $ on the second. If $F$ is at least $C^3$ the condition imposed on $F$ imply \cite{Matumoto1993},
\begin{equation}\label{eq:Ds}
\left. \frac{\partial F}{\partial \zeta^j}\right|_{\zeta=\zx}=\left. \frac{\partial F}{\partial \zx^j}\right|_{\zeta=\zx}=0.
\end{equation}
The metric and the torsion tensors are then given as
\begin{equation}\label{eq:gcoord}
g_{jk}=\left. \frac{\partial^2F}{\partial \zeta^j\partial \zeta^k}\right|_{\zeta=\zx}=\left. \frac{\partial^2F}{\partial \zx^j\partial \zx^k}\right|_{\zeta=\zx}=-\left. \frac{\partial^2F}{\partial \zx^j\partial \zeta^k}\right|_{\zeta=\zx},
\end{equation}
and
\begin{equation}\label{key}
T_{jkl}=\left. \frac{\partial^3F}{\partial \zeta^l\partial \zx^k\partial \zx^lj}\right|_{\zeta=\zx}=-\left. \frac{\partial^3F}{\partial \zx^l\partial \zeta^k\partial \zeta^j}\right|_{\zeta=\zx}.
\end{equation}
The condition (\ref{eq:Ds}) says that if we immerse $ M $ diagonally, $ \imath :M\rightarrow M\times M $, then $ \imath^\ast(dF)=0 $, and, taking into account the non-negativity of $ F $, the imposed conditions mean thus that it has a local minimum on the diagonal.

If additional requirements are imposed on $F$, they will provide the metric with additional properties. Thanks to the geometrical formulation of quantum mechanics \cite{Ashtekar1999}, it is possible to use this description also in the quantum setting.

\section{Lie groupoids}\label{Lie_groupoids}
To make the paper relatively self-contained we decided to summarize in two next sessions basic informations about Lie groupoids and Lie algebroids. In our presentation we used, only slightly adapted, lecture notes by Meinrenken \cite{Meinrenken2017}, but one can also use the book of Mackenzie \cite{Mackenzie2005} as a source of concepts, examples and references.

The structure of a \emph{Lie groupoid}, $\mathcal{G}\rightrightarrows M$, involves a manifold $ \mathcal{G} $ of arrows, a submanifold $ \imath: M \hookrightarrow\mathcal{G} $ of units (objects), and two surjective submersions $\mathsf{s},\mathsf{t}:\mathcal{G}\rightarrow M $, called \emph{source} and \emph{target}, such that $ \mathsf{t}\circ\imath= \mathsf{s}\circ\imath=\mathrm{id}_M$.

One thinks of $ g\in\cG$ as of an arrow from its source $ \mathsf{s}(g) $ to its target $ \mathsf{t}(g) $, with $M$ embedded as trivial arrows. The arrows $ g_1 $ and $ g_2 $ can be composed, $ g_1 \circ g_2$, provided $ \mathsf{s}(g_1) = \mathsf{t}(g_2) $:
\begin{equation}\label{eq:composition}
\circ:\mathcal{G}^{(2)}:=\left\lbrace (g_1,g_2)\in\mathcal{G}^2, \mathsf{s}(g_1) = \mathsf{t}(g_2)\right\rbrace \ni (g_1,g_2)\mapsto g_1 \circ g_2\in\mathcal{G},
\end{equation}
such that $ \mathsf{t}(g_1 \circ g_2)=\mathsf{t}(g_1) $, $ \mathsf{s}(g_1 \circ g_2)=\mathsf{s}(g_1) $. The composition $ \circ $ is associative, i.e., $ (g_1 \circ g_2)\circ g_3=g_1 \circ (g_2\circ g_3)$, whenever
$$ (g_1,g_2,g_3) \in \mathcal{G}^{(3)}:=\left\lbrace (g_1,g_2,g_3)\in\mathcal{G}^3: \mathsf{s}(g_1) = \mathsf{t}(g_2), \mathsf{s}(g_2) = \mathsf{t}(g_3)\right\rbrace\,.$$

Elements $x\in M $ act as units $\id_{x}$, $ \mathsf{t}(g) \circ g=g=g\circ\mathsf{s}(g) $, and there is an inverse map $ \mathrm{inv}:\mathcal{G}\ni g\mapsto g^{-1}\in \mathcal{G} $, such that
$ \mathsf{s}(g^{-1}) = \mathsf{t}(g) $, $ \mathsf{t}(g^{-1}) = \mathsf{s}(g)$, and $ g\circ g^{-1} $ and $ g^{-1}\circ g $ are units, $\id_{\mathsf{t}(g)}$ and $\id_{\mathsf{s}(g)}$, respectively.

Note that the groupoid structure of $ \mathcal{G} $ is completely determined by the graph of the composition,
\begin{equation}\label{eq:graph}
\mathrm{Gr}(\mathcal{G})=\left\lbrace (g_1,g_2,g_3)\in\mathcal{G}^3: g_3=g_1 \circ g_2\right\rbrace.
\end{equation}
Whenever we write $ g_1 \circ g_2 $, we implicitly assume that $ g_1 $ and $ g_2 $ are composable.

A \emph{morphism of Lie groupoids}, $\Phi:\mathcal{G}_1\rightarrow\mathcal{G}_2 $ is a smooth map, such  that $ \Phi(g_1 \circ g_2 )=\Phi(g_1)\circ \Phi(g_2)$.
\begin{example}
	A Lie group is a Lie groupoid with a unique unit.	
\end{example}
\begin{example}
	For any manifold $M$ one can construct the \emph{pair groupoid} $ M\times M\rightrightarrows M $, with $ \mathsf{s}(m,m^\prime)=m^\prime $, $ \mathsf{t}(m,m^\prime)=m $ and
$$ (m, m^\prime) = (m_1,m_1^\prime)\circ (m_2,m_2^\prime) \Leftrightarrow m_1^\prime=m_2\,, m=m_1\,, m_2^\prime=m^\prime\,.$$
The units are given by the diagonal embedding $ D:M \hookrightarrow M\times M $.	
\end{example}
\begin{example}
	Let $ \kappa: P\rightarrow M $ be a $G$-principal bundle. The \emph{Atiyah groupoid} $ \mathcal{G}(P) $ is the groupoid structure on $ P\times P/G=\left\lbrace \left[ \left(x,y\right) \right]:x,y \in P \right\rbrace\rightrightarrows M  $, where the class $ \left[ \left(x,y\right) \right] $ is taken with respect to the equivalence relation $ (x,y)\sim(x^\prime, y^\prime)\Leftrightarrow\exists g\in G: x^\prime =gx, y^\prime= gy $. We have
	$ \mathsf{s}\left(\left[ \left( x,y \right)   \right] \right)=\kappa(y)  $, $ \mathsf{t}\left(\left[ \left( x,y \right)   \right] \right)=\kappa(x)  $, and $ \left[ \left( x,y \right)   \right]\circ \left[ \left( x^\prime,y^\prime \right)   \right]=\left[ \left( x,y^\prime g \right)   \right] $, provided $x^\prime g=y$. Of course, $ \left[ \left( x,y \right)   \right]^{-1}=\left[ \left( y,x \right)   \right] $.
 \end{example}

\section{Lie algebroids}\label{Lie_algebroids}
Infinitesimal parts of Lie groupoids are \emph{Lie algebroids}. A Lie algebroid over $M$ is a vector bundle $ \tau: E\rightarrow M$ together with a Lie bracket $\left[.,. \right] $ on the space of its sections, such that there exists a vector bundle map $ {\za}:E\rightarrow\mathsf{T}M $ covering the identity on $M$, called the \emph{anchor map}, satisfying the Leibniz rule,
$$ \left[X, fY \right]=f\left[X,Y \right]+({\za}(X) f) Y $$ for all $X,Y\in\Sec(E)$,  $ f\in C^\infty(M)$.
\begin{example}
	A Lie algebroid over a point is the same as a finite-dimensional real Lie algebra.
\end{example}
\begin{example}
	The tangent bundle $ \mathsf{T}M $ with its usual bracket of vector fields is a Lie algebroid wit $ {\za}=\mathrm{id }_\mathsf{T}M  $.	
\end{example}
\begin{example}
	For every principal $G$-bundle, $ \kappa: P\rightarrow M  $ the bundle $ E(p)=\mathsf{T}P/G\rightarrow P/G=M $ is a Lie algebroid called the \emph{Atiyah algebroid} of $P$. The bracket on $ \Sec(E) $ is induced from the identification of elements of $\Sec(E)$ with $G$-invariant vector fields on $P$, while the anchor map is induced from $ \mathsf{T}\kappa:\mathsf{T}P\rightarrow TM $. 	
\end{example}
\subsection{The Lie algebroid of a Lie groupoid}\label{La_of_Lg}
We now define the Lie algebroid $ E=\mathrm{Lie}(\mathcal{G}) $ of a
Lie groupoid $ \mathcal{G}\rightrightarrows M $. As a vector bundle,
we take $\mathrm{Lie}(\mathcal{G})=\nu(\mathcal{G},M)=\mathsf{T}\mathcal{G}|_M/\mathsf{T}M $ to be the normal bundle of $M$ in $ \mathcal{G} $. To define the anchor map, note that $ \mathsf{s} $ and  $ \mathsf{t} $ coincide on $ M\subset \mathcal{G} $, so the difference $ (\mathsf{T}\mathsf{s} -\mathsf{T}\mathsf{t}):\mathsf{T}\mathcal{G}\rightarrow \mathsf{T}M$ vanishes on $ \mathsf{T}M\subset \mathsf{T}\mathcal{G}$ and hence descend to a map $ {\za}:\nu (\mathcal{G},M)\rightarrow \mathsf{T}M$.

A vector field $ \tilde{X} $ on $ \mathcal{G} $ is called \emph{left-invariant} if it is tangent to target fibers and $ \mathsf{T}L_g \tilde{X}_h=\tilde{X}_{g\circ h}$ (note the left multiplication). Similarly $ \tilde{X} $ is \emph{right-invariant} if it is tangent to source fibers and satisfies $ \mathsf{T}R_g \tilde{X}_h=\tilde{X}_{h\circ g}$. By construction, the spaces $\mathfrak{X}^L(\mathcal{G})$ and $\mathfrak{X}^R(\mathcal{G})$ of left (resp., right) invariant fields on $ \mathcal{G} $ form Lie subalgebras. Since $ \mathrm{Ker}\mathsf{T}\mathsf{t}|_M $, $ \mathrm{Ker}\mathsf{T}\mathsf{s}|_M $ are complements to $ \mathsf{T}M $ in $ \mathsf{T}\mathcal{G}|_M $, each of this bundles may be identified with the normal bundle.

For $X\in\Sec(\mathrm{Lie}(\mathcal{G}))$ we denote by $ X^L\in\mathfrak{X}_L(\mathcal{G}) $ and $ X^R\in\mathfrak{X}_R(\mathcal{G}) $ the unique left-invariant and right-invariant vector fields, such that $ X^L|_M\sim X $ and  $ X^R|_M\sim X $.
\begin{proposition}
	For all $X\in\Sec(\mathrm{Lie}(\mathcal{G}))$ we have
\be X^L\sim_\mathsf{t}0\,, X^L\sim_\mathsf{s}{\za}(X)\,, X^R\sim_\mathsf{t}-{\za}(X)\,, X^R\sim_\mathsf{s}0\,\ \text{and}\,\ {\za(X)}\sim_\imath X^L-X^R\,,\ee
where $\imath: M\rightarrow\mathcal{G}$ is the inclusion of units. Furthermore
\be\label{a} X^L\sim_{\mathrm{inv}_\mathcal{G}}-X^R\,.
\ee
Here, $ \tilde{X}\sim_f \tilde{Y} $ means that the vector fields $ \tilde{X} $ and $ \tilde{Y}$ are $f$-related. Moreover
\be (fX)^L=f^LX^L\,\ \text{and}\,\  (fX)^R=f^RX^R\,,\ee
where $f^L=f\circ \mathsf{s}$ and $f^R=f\circ \mathsf{t}$.
\end{proposition}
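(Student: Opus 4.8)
The plan is to reduce all seven assertions to the behaviour of $X^L$ and $X^R$ at the units $\id_x$ and then to propagate by invariance; the only substantial point is the inversion formula (\ref{a}). The four $\mathsf{s},\mathsf{t}$-relatedness statements and the two scaling identities are essentially definitional. Since $X^L$ is tangent to target fibres and $X^R$ to source fibres, $X^L\sim_\mathsf{t}0$ and $X^R\sim_\mathsf{s}0$ hold by construction. For $X^L\sim_\mathsf{s}\za(X)$ I would use $\mathsf{s}\circ L_g=\mathsf{s}$, so that $\mathsf{T}\mathsf{s}$ is constant along left translations; writing $X^L_g=\mathsf{T}L_g(X^L_{\id_{\mathsf{s}(g)}})$ reduces $\mathsf{T}\mathsf{s}(X^L_g)$ to $\mathsf{T}\mathsf{s}(X^L_{\id_{\mathsf{s}(g)}})$, and since $\za$ is $(\mathsf{T}\mathsf{s}-\mathsf{T}\mathsf{t})$ descended to the normal bundle while $\mathsf{T}\mathsf{t}(X^L)=0$, this equals $\za(X)_{\mathsf{s}(g)}$. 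The statement $X^R\sim_\mathsf{t}-\za(X)$ is the mirror computation using $\mathsf{t}\circ R_g=\mathsf{t}$ and $\mathsf{T}\mathsf{s}(X^R)=0$, the sign being forced by $\za=\mathsf{T}\mathsf{s}-\mathsf{T}\mathsf{t}$. The identities $(fX)^L=f^LX^L$ and $(fX)^R=f^RX^R$ then follow from uniqueness of invariant extensions: $f^LX^L$ is tangent to target fibres and restricts to $fX$ on $M$, and it is left-invariant because $f^L=f\circ\mathsf{s}$ is constant along left translations ($\mathsf{s}(g\circ h)=\mathsf{s}(h)$); the right-hand case is symmetric via $\mathsf{t}(h\circ g)=\mathsf{t}(h)$.

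For $\za(X)\sim_\imath X^L-X^R$, I would note that $X^L_{\id_x}$ and $X^R_{\id_x}$ both represent the class $X_x$ in $\nu(\mathcal{G},M)$, so their difference lies in $\mathsf{T}_xM$, say equals $\mathsf{T}\imath(V_x)$ for a unique $V_x$; applying $\mathsf{T}\mathsf{s}$ and using $\mathsf{s}\circ\imath=\mathrm{id}_M$ together with $\mathsf{T}\mathsf{s}(X^L_{\id_x})=\za(X)_x$ and $\mathsf{T}\mathsf{s}(X^R_{\id_x})=0$ forces $V_x=\za(X)_x$, which is exactly the claim.

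The main obstacle is (\ref{a}). I would first prove the infinitesimal version $\mathsf{T}(\mathrm{inv})(X^L_{\id_x})=-X^R_{\id_x}$ and then spread it over $\mathcal{G}$ using $\mathrm{inv}\circ L_g=R_{g^{-1}}\circ\mathrm{inv}$ (from $(g\circ h)^{-1}=h^{-1}\circ g^{-1}$): pushing $X^L_g=\mathsf{T}L_g(X^L_{\id_{\mathsf{s}(g)}})$ forward by $\mathsf{T}(\mathrm{inv})$ and then applying right-invariance lands on $-X^R_{g^{-1}}$. To get the infinitesimal statement I expect to need the differential at a pair of units of the multiplication $m:\mathcal{G}^{(2)}\to\mathcal{G}$, $m(g,h)=g\circ h$. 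From the elementary relations $\mathsf{T}m(a,0)=a$ for $a\in\ker\mathsf{T}\mathsf{s}|_{\id_x}$, $\mathsf{T}m(0,b)=b$ for $b\in\ker\mathsf{T}\mathsf{t}|_{\id_x}$, and $\mathsf{T}m(\mathsf{T}\imath(u),\mathsf{T}\imath(u))=\mathsf{T}\imath(u)$ for $u\in\mathsf{T}_xM$, together with the direct-sum decomposition $\mathsf{T}_{(\id_x,\id_x)}\mathcal{G}^{(2)}=\ker\mathsf{T}\mathsf{s}\oplus\ker\mathsf{T}\mathsf{t}\oplus\mathsf{T}_xM$ (confirmed by the dimension count $\operatorname{rank}E+\operatorname{rank}E+\dim M$), I would derive the formula $\mathsf{T}m(a,b)=a+b-\mathsf{T}\imath(\mathsf{T}\mathsf{s}(a))$. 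Differentiating $g\circ g^{-1}=\id_{\mathsf{t}(g)}$ along a curve in the target fibre through $\id_x$ gives $\mathsf{T}m(v,\mathsf{T}(\mathrm{inv})v)=0$ with $v=X^L_{\id_x}$, and the formula then yields $\mathsf{T}(\mathrm{inv})(v)=-v+\mathsf{T}\imath(\mathsf{T}\mathsf{s}(v))=-X^L_{\id_x}+\mathsf{T}\imath(\za(X)_x)$, which by the already-proven relation $\za(X)\sim_\imath X^L-X^R$ equals $-X^R_{\id_x}$. The delicate points I anticipate are keeping track of the composability constraints so that the curves and the tangent pairs $(a,b)$ genuinely lie in $\mathcal{G}^{(2)}$, and checking that the three model directions are independent and span $\mathsf{T}_{(\id_x,\id_x)}\mathcal{G}^{(2)}$.
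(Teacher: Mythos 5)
The paper does not prove this proposition at all: it is quoted as standard background, adapted from Meinrenken's lecture notes, so there is no in-paper argument to compare yours against. On its own merits your proof is correct and complete. The six ``easy'' assertions are handled exactly as one should: the target/source-relatedness claims reduce via $\mathsf{s}\circ L_g=\mathsf{s}$ and $\mathsf{t}\circ R_g=\mathsf{t}$ to the value at units, where $\za=\mathsf{T}\mathsf{s}-\mathsf{T}\mathsf{t}$ together with tangency to the appropriate fibre pins down the answer (including the sign in $X^R\sim_\mathsf{t}-\za(X)$), and the scaling identities follow from uniqueness of the invariant extension. For the inversion formula your strategy --- prove $\mathsf{T}(\mathrm{inv})(X^L_{\id_x})=-X^R_{\id_x}$ at units and propagate via $\mathrm{inv}\circ L_g=R_{g^{-1}}\circ\mathrm{inv}$ --- is the standard one, and the key device, the formula $\mathsf{T}m(a,b)=a+b-\mathsf{T}\imath(\mathsf{T}\mathsf{s}(a))$ obtained from the three model directions and the dimension count for $\mathsf{T}_{(\id_x,\id_x)}\mathcal{G}^{(2)}$, is correctly derived and correctly applied: differentiating $g\circ g^{-1}=\id_{\mathsf{t}(g)}$ along a curve in the target fibre (so that the right-hand side is constant) gives $\mathsf{T}(\mathrm{inv})v=-v+\mathsf{T}\imath(\za(X)_x)$, and the already-established relation $\mathsf{T}\imath(\za(X)_x)=X^L_{\id_x}-X^R_{\id_x}$ closes the loop. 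The composability issues you flag are indeed the only delicate points, and you resolve them correctly (the curve $(g(t),g(t)^{-1})$ lies in $\mathcal{G}^{(2)}$ automatically, and each model direction satisfies $\mathsf{T}\mathsf{s}(a)=\mathsf{T}\mathsf{t}(b)$). Nothing needs to be added.
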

\begin{proposition}
	There exists a unique Lie bracket $ \left[.,. \right]  $ on $ \Sec\left(\mathrm{ Lie}(\mathcal{G})\right)  $, such that
\be\left[X^L,Y^L \right]=\left[X,Y\right]^L\,\  \left[X^R,Y^L \right]=0\,\ \left[X^R,Y^R \right]=-\left[X,Y\right]^R\,.\ee
This is a Lie algebroid bracket with the anchor ${\za}$. Moreover,
\be X^L(f^L)=({\za}(X)f)^L\,,\  X^R(f^L)=0\,,\ X^L(f^R)=0\,,\  X^R(f^R)=-({\za}(X)f)^R\,.\ee
\end{proposition}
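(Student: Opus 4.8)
The plan is to \emph{define} the bracket through the linear bijection $X\mapsto X^L$ between $\Sec(\mathrm{Lie}(\mathcal{G}))$ and the Lie subalgebra $\mathfrak{X}^L(\mathcal{G})$ of left-invariant vector fields, and then to verify the remaining two identities together with the Leibniz and Jacobi properties. Since $\mathfrak{X}^L(\mathcal{G})$ is closed under the Lie bracket of vector fields, for $X,Y\in\Sec(\mathrm{Lie}(\mathcal{G}))$ the field $[X^L,Y^L]$ is again left-invariant; as $X\mapsto X^L$ is a bijection onto $\mathfrak{X}^L(\mathcal{G})$, there is a unique section, denoted $[X,Y]$, with $[X,Y]^L=[X^L,Y^L]$. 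This already forces uniqueness, because any bracket satisfying the first required identity must agree with this one by injectivity of $X\mapsto X^L$. Defining the bracket this way makes $X\mapsto X^L$ an isomorphism of Lie algebras by transport of structure, so antisymmetry and the Jacobi identity on $\Sec(\mathrm{Lie}(\mathcal{G}))$ are inherited from $\mathfrak{X}^L(\mathcal{G})$.

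For the anchor and the four ``moreover'' formulas, I would read everything off the relatedness statements of the previous proposition, recalling that $\tilde X\sim_f\tilde Y$ means $\tilde X(h\circ f)=(\tilde Y h)\circ f$. Applying this to $h=f$ with $f=\mathsf{s}$ or $\mathsf{t}$ gives $X^L(f^L)=(\za(X)f)^L$, $X^R(f^L)=0$, $X^L(f^R)=0$, and $X^R(f^R)=-(\za(X)f)^R$ from $X^L\sim_{\mathsf{s}}\za(X)$, $X^R\sim_{\mathsf{s}}0$, $X^L\sim_{\mathsf{t}}0$, $X^R\sim_{\mathsf{t}}-\za(X)$. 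The Leibniz rule then follows by computing $[X,fY]^L=[X^L,f^LY^L]=f^L[X^L,Y^L]+(X^Lf^L)Y^L=\big(f[X,Y]+(\za(X)f)Y\big)^L$, using $(fY)^L=f^LY^L$ and $X^L(f^L)=(\za(X)f)^L$; injectivity of $X\mapsto X^L$ yields $[X,fY]=f[X,Y]+(\za(X)f)Y$, identifying $\za$ as the anchor.

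The identity $[X^R,Y^R]=-[X,Y]^R$ I would obtain from inversion. Since $\mathrm{inv}$ is a diffeomorphism, the relation $X^L\sim_{\mathrm{inv}}-X^R$ means $(\mathrm{inv})_\ast X^L=-X^R$, and pushforward by a diffeomorphism is a homomorphism for the vector-field bracket; hence $(\mathrm{inv})_\ast[X^L,Y^L]=[-X^R,-Y^R]=[X^R,Y^R]$, while $(\mathrm{inv})_\ast[X,Y]^L=-[X,Y]^R$, giving the claim.

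The genuine obstacle, and the only non-formal step, is $[X^R,Y^L]=0$. Here I would argue through flows. Left-invariance means the local flow $\phi_s$ of $Y^L$ commutes with every left translation $L_g$; writing $g=L_g(\id_{\mathsf{s}(g)})$ gives $\phi_s(g)=g\circ b_s(\mathsf{s}(g))$ with $b_s(x):=\phi_s(\id_x)$, i.e.\ right multiplication by a flow of local bisections. Symmetrically the flow $\psi_t$ of $X^R$ is left multiplication, $\psi_t(g)=c_t(\mathsf{t}(g))\circ g$ with $c_t(y):=\psi_t(\id_y)$. Associativity of the groupoid product, $(c_t(\mathsf{t}(g))\circ g)\circ b_s(\mathsf{s}(g))=c_t(\mathsf{t}(g))\circ(g\circ b_s(\mathsf{s}(g)))$, then shows $\phi_s\circ\psi_t=\psi_t\circ\phi_s$, so the flows commute and $[X^R,Y^L]=0$. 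The points needing care are the composability bookkeeping, namely $\mathsf{t}(b_s(x))=x$ and $\mathsf{s}(c_t(y))=y$, which hold because $Y^L$ is tangent to target fibers and $X^R$ to source fibers, and the fact that the flows are only locally defined, which is harmless since the bracket is a pointwise quantity.
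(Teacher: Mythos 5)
The paper states this proposition as recalled background (adapted from Meinrenken's lecture notes) and gives no proof of its own, so there is nothing to compare against; judged on its own terms, your argument is correct and is the standard one. Defining $[X,Y]$ by transporting the vector-field bracket through the bijection $X\mapsto X^L$ onto $\mathfrak{X}^L(\mathcal{G})$ settles existence and uniqueness at once; the four function identities do follow directly from the relatedness statements $X^L\sim_{\mathsf{s}}\za(X)$, $X^R\sim_{\mathsf{s}}0$, $X^L\sim_{\mathsf{t}}0$, $X^R\sim_{\mathsf{t}}-\za(X)$ together with $f^L=f\circ\mathsf{s}$, $f^R=f\circ\mathsf{t}$; the Leibniz computation correctly identifies $\za$ as the anchor (its being a bundle map is already part of the paper's construction); and $[X^R,Y^R]=-[X,Y]^R$ follows from $\mathrm{inv}_*X^L=-X^R$ and naturality of the bracket under diffeomorphisms. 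You also correctly isolate $[X^R,Y^L]=0$ as the only step with real content, and your flow argument --- left-invariant flows act by right multiplication by local bisections, right-invariant flows by left multiplication, and the two commute by associativity --- is sound, including the composability bookkeeping $\mathsf{t}(b_s(x))=x$, $\mathsf{s}(c_t(y))=y$ and the remark that local flows suffice since the bracket is local.
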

\begin{example}
	If $ \mathcal{G}=G $ is a Lie group, then $ \mathrm{Lie}(\mathcal{G}) $ is the Lie algebra of $G$.
\end{example}
\begin{example}
	If $ \mathcal{G}=M\times M $ is a pair groupoid, then $ \mathrm{Lie}(\mathcal{G})=\mathsf{T}M $.
\end{example}
\begin{example}
	If $ \mathcal{G}=P\times P/G $ is an Atiyah groupoid associated with a $G$-principal bundle $P$, then $ \mathrm{Lie}(\mathcal{G})=\mathsf{T}P/G $ is the Atiyah algebroid of $P$.
\end{example}

\subsection{Lie algebroid connections}
Let $E\rightarrow M$ be a Lie algebroid. An $E$-\emph{connection} on  a real vector bundle $V\rightarrow M$ is a bilinear map
$$ \nabla:\Sec(E)\times \Sec(V)\rightarrow \Sec(V)\,,\ (X,\sigma)\mapsto \nabla_X \sigma\,,$$
with the properties: $\nabla_{fX} = f\nabla_X $, $ \nabla_X (f\sigma) = f\nabla_X \sigma +{\za}(X)(f)\sigma$.

For $E=\mathsf{T}M$ the $E$-connection is the standard affine connection. On the other hand, every $\mathsf{T}M$-connection $\tilde{\nabla}$ determines an $E$-connection by setting $\nabla_X=\tilde{\nabla}_{{\za}(X)}$.

The curvature of an $E$-connection is tensor field $ \mathrm{Curv}^\nabla \in \Sec(\Lambda^2 E^\ast\ot\mathrm{End}(V))$ defined by
$$ \mathrm{Curv}^\nabla(X,Y) = \left[\nabla_X,\nabla_Y\right]-\nabla_{\left[X,Y \right]}=\nabla_X \nabla_Y - \nabla_Y \nabla_X- \nabla_{\left[X,Y \right]}\,.$$
Thus, $ \mathrm{Curv}^\nabla=0 $ if and only if the map $X\mapsto \nabla_X$ preserves the brackets. In this case the connection is called \emph{flat} or a \emph{representation} of $E$.

In the case $V=E$, the \emph{torsion} of an $E$-connection is the tensor field $\mathrm{Tor}^\nabla\in\Sec(\Lambda^2E\ot E)$ given by
$$\mathrm{Tor}^\nabla(X,Y)=\nabla_X Y-\nabla_Y X-\left[X,Y \right]\,.$$

A \emph{pseudo-Riemannian metric} on $E$ is a tensor field $g\in\Sec(\bigodot^2E^\ast)$, where $ \bigodot^2E^\ast $ is the symmetric tensor product $ E^\ast\odot E^\ast $, such that $\Sec(E)\ni X \mapsto g(X,\cdot)\in\Sec(E^\ast)$ defines isomorphism of vector bundles.

\begin{proposition}
	For every pseudo-Riemannian metric $g$ on $E$ there is a unique torsion-free $E$-connection $\nabla^g$ on $E$ that is \emph{metric}, i.e. $g\left(\nabla^g_X Y,Z \right)+g\left(Y,\nabla^g_X Z\right)={\za}(X)g(Y,Z)$.
	We call it the \emph{Levi-Civita connection} of $g$.
\end{proposition}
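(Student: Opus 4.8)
The plan is to mimic the classical proof of the fundamental theorem of (pseudo-)Riemannian geometry, transporting the Koszul formula to the algebroid setting and using the non-degeneracy of $g$ in place of the tangent-bundle inner product. First I would establish uniqueness. Assuming a torsion-free metric $E$-connection $\nabla^g$ exists, I would write the metric compatibility identity ${\za}(X)g(Y,Z)=g(\nabla^g_X Y,Z)+g(Y,\nabla^g_X Z)$ three times, cyclically permuting $X,Y,Z$, then add the first two and subtract the third. Eliminating the differences $\nabla^g_X Y-\nabla^g_Y X$ by the torsion-free condition $\nabla^g_X Y-\nabla^g_Y X=[X,Y]$ collapses the six terms to the Koszul formula
\[
2\,g(\nabla^g_X Y,Z)={\za}(X)g(Y,Z)+{\za}(Y)g(Z,X)-{\za}(Z)g(X,Y)+g([X,Y],Z)-g([Y,Z],X)-g([X,Z],Y).
\]
Since $g$ induces an isomorphism $\Sec(E)\to\Sec(E^\ast)$, the right-hand side, read as a function of $Z$, pins down $\nabla^g_X Y$ uniquely, which settles uniqueness.

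For existence I would turn this around and \emph{define} $\nabla^g_X Y$ to be the unique section such that the displayed identity holds for all $Z$; this is legitimate precisely because $g$ is non-degenerate, provided the right-hand side is genuinely $C^\infty(M)$-linear in $Z$. Verifying that tensoriality is the crux of the argument: replacing $Z$ by $fZ$ produces anomalous first-order terms from the three anchor-derivative terms and from the two bracket terms, through the Leibniz rule $[Y,fZ]=f[Y,Z]+({\za}(Y)f)Z$ and the bundle-map property $\za(fZ)=f\za(Z)$; these inhomogeneous contributions must cancel, and they do, using only the symmetry of $g$.

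It then remains to check that the section $\nabla^g_X Y$ so defined is a connection. I would verify $C^\infty(M)$-linearity in $X$, i.e. $\nabla^g_{fX}Y=f\nabla^g_X Y$, and the Leibniz rule $\nabla^g_X(fY)=f\nabla^g_X Y+({\za}(X)f)Y$, by substituting $fX$ and $fY$ into the Koszul right-hand side; here the precise form $[fX,Y]=f[X,Y]-({\za}(Y)f)X$ of the algebroid bracket is exactly what makes the extra derivative terms combine into the required Leibniz contribution. Finally, torsion-freeness and metricity hold by construction: forming $2g(\nabla^g_X Y-\nabla^g_Y X,Z)$ from two copies of the Koszul formula leaves only $2g([X,Y],Z)$, while forming $2g(\nabla^g_X Y,Z)+2g(Y,\nabla^g_X Z)$ leaves only $2{\za}(X)g(Y,Z)$; both are immediate cancellations.

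The main obstacle I anticipate is purely the bookkeeping of the tensoriality and Leibniz checks: every step reduces to showing that the inhomogeneous terms generated by moving a function $f$ past the anchor and the bracket cancel against one another. No deeper input is needed — in particular, neither the Jacobi identity for the algebroid bracket nor any compatibility of $\za$ with brackets is used — so the content of the proposition is entirely the algebraic identity encoded in the Koszul formula together with the non-degeneracy of $g$.
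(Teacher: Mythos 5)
Your proposal is correct and follows exactly the route the paper takes: the paper's entire proof consists of the remark that ``the proof is the same as in the standard case'' together with the Koszul formula, which is precisely the identity you derive and then use to define $\nabla^g$ via the non-degeneracy of $g$. Your additional verifications (tensoriality in $Z$, the Leibniz rules, and the observation that only the Leibniz property of the bracket and $\za(fZ)=f\za(Z)$ are needed, not the Jacobi identity) are accurate and simply fill in the details the paper leaves implicit.
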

The proof is the same as in the standard case:
$$ 2g(\nabla^g_X Y,Z) ={\za}(X)g(Y,Z)+{\za}(Y)g(Z,X)-{\za}(Z)g(X,Y)+g(\left[X,Y\right],Z )-g(\left[Y,Z\right],X )-g(\left[X,Z\right],Y)\,.$$

The \emph{dual connection} of an $E$-connection $\nabla$ on a pseudo-Riemannian Lie algebroid $(E,g)$ is the connection $\nabla^\ast$ defined by
$$\za(X)g(Y,Z)=g(\nabla_X Y,Z)+g(Y,\nabla_X^\ast Z)\,.$$
Hence, the Levi-Civita connection is a self-dual, torsion-free connection.

\section{Contrast functions on Lie groupoids and dualistic structures}
Our aim in this section is to show that the right framework for contrast functions, as they were defined for stochastic models above, is the theory of the Lie groupoids and Lie algebroids. In this sense, the two point contrast function $F=F(x,y)$ is viewed as a one-point function of the pair groupoid $\mathcal{G}=M\times M$. The metric induced by $F$ in the standard case is the (pseudo-Riemannian) metric on the Lie algebroid $\mathrm{Lie}(\mathcal{G})$. In the standard case $\cG=M\ti M$ it reduces to a metric on $\mathsf{T}M$ or, as we use to say in such a case, on $M$. Also the pair of dual connections defined by the two-point contrast function $F(x,y)$ can be generated as the pair of dual $E$-connections with $E=\mathrm{Lie}({\mathcal{G})}$. In this sense, the
expected and observed $\za$-geometries of a statistical model introduced by Chentsov
and Amari (cf. \cite{Amari1985})
are particular instances of geometries derived from contrast functions on Lie groupoids. Consequently,
these statistical geometries may be studied within this unified framework.

We start with the following Lemma,
\begin{lemma}\label{L1}
	Let $M$ be a submanifold in $N$ and let $F:N\rightarrow\mathbb{R}$ vanishes on $M$ together with all its derivatives up to an order $k$, i.e. the $k$-th jet of $F$ vanishes on $M$, $\mathsf{j}^kF|_M=0$. Then, for vector fields $\tilde{X}_1,\ldots,\tilde{X}_{k+1}$ on $N$, the derivative
$$\omega(\tilde{X}_1,\ldots,\tilde{X}_{k+1})=\tilde{X_1}\cdots\tilde{X}_{k+1}(F)|_M$$ defines a symmetric $(k+1)$-tensor on $M$ that depends only on the class $ X_1,\ldots,X_{k+1}\in \nu(N,M)$ in the normal bundle $ \nu(N,M)=\mathsf{T}N|_M/\mathsf{T}M$. In other words, $\omega\in\bigodot^{k+1}\nu^*(N,M)$.
\end{lemma}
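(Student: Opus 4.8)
(1) that $\omega$ is well-defined as a tensor (i.e.\ $C^\infty(M)$-linear in each argument, not merely a differential operator); (2) that it is symmetric; and (3) that it descends to the normal bundle $\nu(N,M)$, i.e.\ it depends only on the classes $X_i \bmod \mathsf{T}M$. The key analytic input throughout is the hypothesis $\mathsf{j}^kF|_M=0$: every partial derivative of $F$ of order $\le k$ vanishes along $M$.

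The main idea is a commutator/Leibniz argument. First I would verify $C^\infty(N)$-linearity in each slot when restricted to $M$. For tensoriality in the last argument, note
\[
\tilde{X}_1\cdots\tilde{X}_k(f\tilde{X}_{k+1})(F)
= \sum \big(\text{terms with derivatives falling on }f\big) + f\,\tilde{X}_1\cdots\tilde{X}_{k+1}(F).
\]
Each term in which at least one derivative hits $f$ leaves at most $k$ copies of the $\tilde X_i$ acting on $F$; restricted to $M$ such a term is a product of a derivative of $f$ with a derivative of $F$ of order $\le k$, which vanishes on $M$ by hypothesis. Hence only the last term survives on $M$, giving $\omega(\ldots,f\tilde{X}_{k+1})=f\,\omega(\ldots,\tilde{X}_{k+1})$. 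By the symmetry argument below this linearity then holds in every slot, so $\omega$ is a genuine $(k+1)$-covariant tensor on $M$.

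For symmetry, the crucial observation is that swapping two adjacent vector fields changes $\tilde{X}_1\cdots\tilde{X}_{k+1}(F)$ only by terms containing a commutator $[\tilde{X}_i,\tilde{X}_{i+1}]$, and each such term is again a composition of $k$ first-order operators applied to $F$ (the commutator counts as one first-order operator). Restricted to $M$ these are derivatives of $F$ of order $\le k$, hence vanish by the jet hypothesis. Thus $\omega$ is invariant under transposition of adjacent arguments, and since adjacent transpositions generate the full symmetric group $S_{k+1}$, it is fully symmetric.

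Finally, to show $\omega$ descends to $\nu(N,M)=\mathsf{T}N|_M/\mathsf{T}M$, I must check it vanishes whenever one argument $\tilde{X}_i$ is tangent to $M$ along $M$. By symmetry it suffices to treat the last argument $\tilde{X}_{k+1}$: if $\tilde{X}_{k+1}$ is tangent to $M$ on $M$, then $\tilde{X}_{k+1}(F)$ is a function whose $(k)$-jet vanishes on $M$ (differentiating along $M$ the identically-zero restriction $F|_M$, combined with the vanishing of all derivatives of $F$ up to order $k$), so inductively applying the same bookkeeping shows $\tilde{X}_1\cdots\tilde{X}_{k+1}(F)|_M=0$. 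Equivalently, one reduces to the case where all $\tilde X_i$ are coordinate fields in an adapted chart and reads off that $\omega$ equals the order-$(k+1)$ partial derivative of $F$, which is symmetric and, by $\mathsf{j}^kF|_M=0$, vanishes as soon as any index is a tangential (``$M$-direction'') coordinate. I expect the main obstacle to be the careful bookkeeping in the Leibniz/commutator expansions: one must confirm that \emph{every} error term genuinely carries a derivative of $F$ of order at most $k$ and therefore dies on $M$, rather than secretly producing an order-$(k+1)$ term. Once that counting is pinned down, tensoriality, symmetry, and descent to $\nu(N,M)$ all follow from the same mechanism, establishing $\omega\in\bigodot^{k+1}\nu^*(N,M)$.
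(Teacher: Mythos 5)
Your proposal is correct and follows essentially the same route as the paper: symmetry via the commutator term being a $k$-fold derivative of $F$ that dies on $M$, tensoriality via Leibniz plus symmetry, and descent to $\nu(N,M)$ because a vector field tangent to $M$ annihilates (on $M$) the function $\tilde{X}_2\cdots\tilde{X}_{k+1}F$, which already vanishes there. The only cosmetic difference is that the paper checks tensoriality and descent in the \emph{first} slot (where both are one-line observations) and then invokes symmetry, whereas you work in the last slot, which forces the slightly heavier Leibniz bookkeeping you describe.
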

\begin{proof}
	Observe first that $\omega$ is symmetric. For it suffices to show that it does not change under the transposition of neighboring arguments. We have
	\begin{eqnarray*}
	 \tilde{X_1}\cdots\tilde{X}_{k+1}(F)\_M&-&\tilde{X_1}\cdots\tilde{X}_{i-1}\tilde{X}_{i+1}\tilde{X}_{i}\tilde{X}_{i+2}\cdots \tilde{X}_{k+1}(F)|_M= \\
	&=&\tilde{X_1}\cdots\tilde{X}_{i-1}\left[ \tilde{X}_{i+1},\tilde{X}_{i}\right] \tilde{X}_{i+2}\cdots \tilde{X}_{k+1}(F)|_M=0,
	\end{eqnarray*}
	since the $k$-th jet of $F$ vanishes on $M$. Further, $\omega$ is a tensor, since $$\omega(f\tilde{X}_1,\ldots,\tilde{X}_{k+1})=(f\tilde{X_1})\cdots\tilde{X}_{k+1}(F)|_M=f|_M\,\omega(\tilde{X}_1,\ldots,\tilde{X}_{k+1})\,.$$ Finally, if $X_1=X_1^\prime$, then $\tilde{X}_1-\tilde{X}_1^\prime|_M\in\mathsf{T}M$, so that $(\tilde{X}_1-\tilde{X}_1^\prime)\tilde{X}_2\cdots\tilde{X}_{k+1} F|_M=0$ since   $\tilde{X}_2\cdots\tilde{X}_{k+1}F$ vanishes on $M$.
\end{proof}	

Let now $\mathcal{G}\rightrightarrows M$ be a Lie groupoid and $F:\mathcal{G}\rightarrow\mathbb{R}$ be a smooth function vanishing on $M\hookrightarrow \mathcal{G}$. We say that $F$ is a \emph{contrast function} if $dF|_M=0$. A direct consequence of the Lemma~\ref{L1} is the following
\begin{proposition}
	Any contrast function $F:\mathcal{G}\rightarrow\mathbb{R}$ defines on the Lie algebroid $E=\mathrm{ Lie}(\mathcal{G})$ a symmetric 2-form $g^F\in\bigodot^2E^\ast$ by $g^F(X,Y):=\tilde{X}\tilde{Y} F|_M$, where $ \tilde{X} $ and $ \tilde{Y} $ are any vector fields on $\mathcal{G}$ representing $X, Y \in \nu(\mathcal{G},M)$ at points of $M$. In particular,
$$ g^F(X,Y)=X^LY^LF|_M=X^LY^RF|_M=X^RY^RF|_M\,.$$
Of course, if $F\ge 0$, then $g^F$ is non-negatively defined.
\end{proposition}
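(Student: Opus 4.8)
The statement is essentially an immediate corollary of Lemma~\ref{L1} combined with the construction of the invariant lifts $X^L,X^R$, so the plan is to reduce each assertion to a result already in hand. First I would verify the hypothesis of Lemma~\ref{L1}. By definition a contrast function satisfies $F|_M=0$ together with $dF|_M=0$, the latter being the vanishing of the full differential $dF_p$ at every $p\in M$. Since the $1$-jet $\mathsf{j}^1_pF$ is completely determined by $F(p)$ and $dF_p$, these two conditions are exactly $\mathsf{j}^1F|_M=0$. Hence Lemma~\ref{L1} applies with $N=\mathcal{G}$ and $k=1$, and it yields at once that $(X,Y)\mapsto\tilde X\tilde Y F|_M$ is a well-defined \emph{symmetric} bilinear form whose value depends only on the normal classes $X,Y\in\nu(\mathcal{G},M)$, independently of the chosen lifts $\tilde X,\tilde Y$. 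As $\mathrm{Lie}(\mathcal{G})=\nu(\mathcal{G},M)=E$, this already establishes that $g^F\in\bigodot^2E^\ast$ and that $g^F(X,Y):=\tilde X\tilde Y F|_M$ is meaningful for arbitrary representatives.

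For the explicit formulas I would invoke the construction of the invariant lifts, by which \emph{both} $X^L|_M\sim X$ and $X^R|_M\sim X$; that is, $X^L$ and $X^R$ are two legitimate representatives on $M$ of one and the same class $X$. Because Lemma~\ref{L1} has already guaranteed independence of the representative, I may substitute either lift for $\tilde X$ and either for $\tilde Y$. All four combinations $(X^L,Y^L)$, $(X^L,Y^R)$, $(X^R,Y^L)$, $(X^R,Y^R)$ then return the same value $g^F(X,Y)$; in particular
\[
g^F(X,Y)=X^LY^LF|_M=X^LY^RF|_M=X^RY^RF|_M\,,
\]
which is precisely the asserted chain of equalities.

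For the non-negativity when $F\ge 0$ I would argue pointwise. Fixing $p\in M$ and $X\in\nu_p(\mathcal{G},M)$, I choose a lift $\tilde X$ with $\tilde X_p$ projecting to $X$ and let $\gamma$ be its integral curve through $p$. Setting $h(t)=F(\gamma(t))$, I get $h(0)=0$ from $F|_M=0$, then $h'(0)=dF_p(\tilde X_p)=0$ from $dF|_M=0$, while $h\ge 0$ from $F\ge 0$. Thus $t=0$ is a minimum of $h$, so $h''(0)\ge 0$; and since $\frac{d}{dt}(\tilde X F)(\gamma(t))=(\tilde X\tilde X F)(\gamma(t))$ along an integral curve, we have $h''(0)=(\tilde X\tilde X F)(p)=g^F(X,X)$. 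Hence $g^F(X,X)\ge 0$, i.e. $g^F$ is non-negatively defined.

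I expect no genuine obstacle here: once the jet hypothesis is matched, the whole statement is a direct reading-off from Lemma~\ref{L1} and the invariant-lift construction. The only two points that deserve a moment's care are checking that $F|_M=0$ and $dF|_M=0$ together encode the full $1$-jet condition required by the lemma, and verifying in the last step that the second $t$-derivative of $F$ along an integral curve of $\tilde X$ genuinely reproduces $\tilde X\tilde X F$ at $p$, so that the elementary ``a non-negative function vanishing at $p$ has non-negative Hessian there'' argument delivers the semi-definiteness.
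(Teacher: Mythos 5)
Your proof is correct and takes essentially the same route as the paper, which states this proposition as a direct consequence of Lemma~\ref{L1} applied with $k=1$ (the contrast-function conditions $F|_M=0$ and $\xd F|_M=0$ being exactly $\mathsf{j}^1F|_M=0$) together with the fact that $X^L$ and $X^R$ are both representatives of the class $X$. Your integral-curve argument for $g^F(X,X)\ge 0$ correctly supplies the elementary detail that the paper dismisses with ``of course''.
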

A symmetric 2-form on $E$ we will call a \emph{pseudometric}. We call the contrast function $F$ \emph{regular} if $g^F$ has constant rank as a morphism of vector bundles $g^F:E\to E^*$, and \emph{metric} if $g^F$ is a pseudo-Riemannian (non-degenerate) metric on $E$, i.e. $g^F:E\to E^*$ is an isomorphism. If $F\ge 0$ is metric, then $g^F$ is a Riemannian metric on $E$.
\begin{example}\label{ex1}
Define $F:GL(n, \mathbb{R})\rightarrow\mathbb{R}$ by
$$F(A)=\tr(I-A)(I-A)^t\,.$$
Let $ X,Y\in\mathrm{Lie}(GL(n, \mathbb{R}))=gl(n, \mathbb{R}) $. We have
\beas X^LY^LF(I)&=&X^L\frac{d}{ds}_{|{s=0}}\,\tr(I-A\exp(sY))(I-A\exp(sY))^t\\
&=&X^L\left(-\tr(AY(I-A)^t+(I-A)Y^tA^t\right)(I)\\
&=&\frac{d}{ds}_{|{s=0}}\, \left(-\tr(\exp(sX)Y(I-\exp(sX))^t+(I-\exp(sX))Y^t\exp(sX^t)^t\right)\\
&=&\tr(YX^t+XY^t)=2\tr(XY^t)
\eeas
\end{example}

\begin{remark}
Metric contrast functions on $M\ti M$ are called \emph{yokes} in \cite{Barndorff1987} (see also \cite{Barndorff1997,Blesild1991} for the idea of generating tensors from yokes).
\end{remark}
\begin{proposition}
Let $F$ be a contrast function on $\mathcal{G}$. Then, $F^*=F\circ\mathrm{inv}$ is also a contrast function and $g^F=g^{F^*}$.
\end{proposition}
\begin{proof}
For $X,Y$ being sections of $\mathrm{ Lie}(\mathcal{G})$ we have (cf. (\ref{a}))
$$X^L(F^*)|M=X^L(F\circ\mathrm{inv})|M=-X^R(F)\circ\mathrm{inv}|M=0\,,$$
so $F^*$ is a contrast function. Moreover,
$$g^F(X,Y)=X^LY^L(F)|M=X^LY^L(F)\circ\mathrm{inv}|M=X^RY^R(F\circ\mathrm{inv})|M=g^{F^*}(X,Y)\,.
$$
\end{proof}

For a metric contrast function $F$ on $\mathcal{G}$ we define on $E=\mathrm{ Lie}(\mathcal{G})$ the Lie($ \mathcal{G} $)-connections
$\nabla_X^F $ and  $\nabla_X^{F^\ast} $ by
\be\label{b} g(\nabla_X^F Y,Z)=X^LY^LZ^RF|_M \,\ \text{and}\,\  g(\nabla_X^{F^\ast} Y,Z)=X^LY^LZ^RF^*|_M\,, \ee
where $ F^\ast=F\circ\mathrm{inv} $.
\begin{theorem}
	$ \nabla^F $ and $ \nabla^{F^\ast} $ are dual, torsion-free $E$-connections on $E$, such that $\frac{1}{2}\left(\nabla^F+\nabla^{F^\ast} \right)$ is the Levi-Civita connection $\nabla^g$ with respect to $g=g^F=g^{F^*}$. In particular, if $F=F^\ast$, the connection $ \nabla^F =\nabla^{F^\ast} $ is the Levi-Civita connection $\nabla^g$ for $g$.
\end{theorem}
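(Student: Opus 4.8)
The plan is to verify each asserted property of $\nabla^F$ and $\nabla^{F^*}$ directly from the defining formula (\ref{b}), using the algebraic relations among left- and right-invariant vector fields from the two Propositions in Section~\ref{La_of_Lg} together with the symmetry properties established in Lemma~\ref{L1}. The key observation making everything work is that, since $F$ is a contrast function (so $\mathsf{j}^1F|_M=0$) but \emph{not} in general higher-order flat, third derivatives of $F$ along $M$ are governed by commutators: whenever two of the three vector fields are swapped, the difference is a commutator applied to $F$ and restricted to $M$, which by the Lemma is controlled by the second jet. So the whole proof is a bookkeeping exercise in reordering $X^L,Y^L,Z^R$ and replacing commutators using the bracket relations $[X^L,Y^L]=[X,Y]^L$, $[X^R,Y^L]=0$, $[X^R,Y^R]=-[X,Y]^R$.

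First I would check that $\nabla^F$ really is an $E$-connection: $C^\infty(M)$-linearity in $X$ follows from $(fX)^L=f^LX^L$ and $f^L|_M=f$, while the Leibniz rule in $Y$ follows from $(fY)^L=f^LY^L$, the product rule for the derivation $X^L$, and $X^L(f^L)=(\za(X)f)^L$, with everything restricted to $M$; the terms where $X^L$ hits $f^L$ survive and reproduce the anchor term $\za(X)(f)g(Y,Z)$, while noting that on $M$ we may freely interchange $Z^R$ and $Z^L$ representatives since both represent $Z\in\nu(\cG,M)$. Next, torsion-freeness: I would compute $g(\nabla^F_XY-\nabla^F_YX,Z)=(X^LY^L-Y^LX^L)Z^RF|_M=[X^L,Y^L]Z^RF|_M=[X,Y]^LZ^RF|_M=g([X,Y],Z)$, where the commutator was moved to the front using that the extra reorderings only introduce further commutators annihilated on $M$ by the second-jet vanishing; hence $\mathrm{Tor}^{\nabla^F}=0$, and identically for $F^*$.

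Then comes duality. I would show $\za(X)g(Y,Z)=g(\nabla^F_XY,Z)+g(Y,\nabla^{F^*}_XZ)$ by starting from $\za(X)g(Y,Z)=X^L(Y^LZ^RF)|_M$ — using that $g(Y,Z)=Y^LZ^RF|_M$ is a function on $M$ whose anchored derivative is computed by the invariant lift $X^L$ — expanding by Leibniz into $X^LY^LZ^RF|_M+Y^LX^LZ^RF|_M$, recognizing the first summand as $g(\nabla^F_XY,Z)$, and then rewriting the second using the earlier Proposition's identity $\za(X)\sim_\imath X^L-X^R$ together with $g^{F}=g^{F^*}$ and the invariance relation (\ref{a}) $X^L\sim_{\mathrm{inv}}-X^R$ to convert $Y^LX^LZ^RF|_M$ into $g(Y,\nabla^{F^*}_XZ)$. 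This is the step I expect to be the \textbf{main obstacle}: the index/variable gymnastics of moving the active derivation past the others and correctly identifying which invariant lift (left vs.\ right) and which function ($F$ vs.\ $F^*$) each surviving term belongs to is where sign errors and mismatched representatives are most likely to creep in, and it is exactly here that one must use $g^F=g^{F^*}$ from the preceding Proposition rather than treating the two metrics as a priori distinct.

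Finally, once duality and torsion-freeness are in hand, the statement about $\tfrac12(\nabla^F+\nabla^{F^*})$ is immediate: a convex average of two torsion-free connections is torsion-free, and adding the two duality relations gives $2\za(X)g(Y,Z)=g(\nabla^F_XY+\nabla^{F^*}_XY,Z)+g(Y,\nabla^F_XZ+\nabla^{F^*}_XZ)$, i.e.\ $\za(X)g(Y,Z)=g(\tfrac12(\nabla^F+\nabla^{F^*})_XY,Z)+g(Y,\tfrac12(\nabla^F+\nabla^{F^*})_XZ)$, so the average is metric; by the uniqueness statement in the Levi-Civita Proposition it must equal $\nabla^g$. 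The special case $F=F^*$ forces $\nabla^F=\nabla^{F^*}=\nabla^g$ directly, which also recovers $T=0$ exactly as in the self-dual discussion after (\ref{eq:dual}).
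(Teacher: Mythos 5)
Your verification of the connection axioms and of torsion-freeness matches the paper's computation, and your final step (the average of two dual torsion-free connections is metric and torsion-free, hence equals $\nabla^g$ by the uniqueness of the Levi-Civita connection) is exactly the intended conclusion. However, the duality step --- which you yourself flag as the crux --- is broken as written. First, the identity $\za(X)g(Y,Z)=X^L(Y^LZ^RF)|_M$ is false: $X^L|_M$ is not tangent to $M$, so $X^L(h)|_M$ depends on the extension $h$ and not only on $h|_M=g(Y,Z)$; indeed $X^L(Y^LZ^RF)|_M$ is by definition (\ref{b}) equal to $g(\nabla^F_XY,Z)$, not to $\za(X)g(Y,Z)$. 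The relation that actually computes the anchored derivative of a function on $M$, and which the paper uses, is $\za(X)(h|_M)=(X^L-X^R)(h)|_M$, coming from $\za(X)\sim_\imath X^L-X^R$. Second, there is no ``Leibniz expansion'' turning $X^L(Y^LZ^RF)$ into $X^LY^LZ^RF+Y^LX^LZ^RF$; the former is literally the single term $X^LY^LZ^RF$. One can check, using torsion-freeness, that your claimed decomposition together with the duality you are trying to prove would force $g(\nabla^F_YX,Z)=g(Y,\nabla^{F^\ast}_XZ)$ for all $X,Y,Z$, which fails in general (it would make the statement independent of $T^F$).

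The working argument goes in the opposite direction: start from the sum $g(\nabla^F_XY,Z)+g(Y,\nabla^{F^\ast}_XZ)=X^LY^LZ^RF|_M+X^LZ^LY^RF^\ast|_M$, apply the inversion relation (\ref{a}), $X^L\sim_{\mathrm{inv}}-X^R$, three times to rewrite $X^LZ^LY^RF^\ast=-(X^RZ^RY^LF)\circ\mathrm{inv}$, restrict to $M$ (where $\mathrm{inv}$ is the identity), and then use $[Z^R,Y^L]=0$ to combine the two terms into $(X^L-X^R)(Y^LZ^RF)|_M=\za(X)g(Y,Z)$. A further small correction: only the \emph{first} jet of $F$ vanishes on $M$, not the second, as your torsion-freeness paragraph suggests; fortunately that step needs no such vanishing, since $[X^L,Y^L]Z^RF|_M=[X,Y]^LZ^RF|_M$ is already a second derivative of $F$ and hence equals $g([X,Y],Z)$ directly.
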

\begin{proof}
	Let us first prove that $ \nabla^F $ and $ \nabla^{F^\ast} $ are $E$-connections. Linearity with respect to $X,Y,Z$ is clear. Since, for $f\in C^\infty(M)$ we have
	$$X^LY^L(fZ)^RF|_M=X^LY^Lf^RZ^RF|_M=fX^LY^LZ^RF|_M\,,$$
	the equations (\ref{b}) properly define $\nabla_X^FY$ and $\nabla_X^{F^*}Y$ as sections of $E$.
Similarly, $$g(\nabla_{fX}^FY,Z)=(fX)^LY^LZ^RF|_M=f^L|_MX^LY^LZ^RF|_M=fg(\nabla_X^FY,Z)\,.$$
Finally, \beas g(\nabla_X^FfY,Z)&=&X^L(fY)^LZ^RF|_M=X^Lf^LY^LZ^RF|_M=({\za}(X)f)^LY^LZ^RF|_M+f^LX^LY^LZ^RF|_M\\
&=&{\za}(X)(f)g(Y,Z)+fg(\nabla_X^FY,Z)\,,\eeas
so that $\nabla_X^FfY={\za}(X)(f)Y+f\nabla_X^FY$.
Now, we have
$$
g(\nabla_X^FY,Z)+g(Y,\nabla_X^{F^\ast}Z)=X^LY^LZ^RF|_M+X^LZ^LY^RF^\ast|_M\,.$$ But,
$$X^LZ^LY^RF^\ast=X^LZ^LY^R(F\circ\mathrm{inv})=-X^LZ^L(Y^L(F)\circ\mathrm{inv}) =-X^RZ^RY^L(F)\circ\mathrm{inv}\,.$$
Finally,
$$-X^RZ^RY^L(F)\circ\mathrm{inv}|_M =-X^RZ^RY^L(F)|_M\,,$$ so that
\beas g(\nabla_X^FY,Z)+g(Y,\nabla_X^{F^\ast}Z)&=&(X^LY^LZ^R(F)-X^RZ^RY^L(F))|_M=
(X^L-X^R)(Y^LZ^RF)|_M \\
&=&{\za}(X)g(Y,Z)\,,
\eeas
that shows that $ \nabla_X^{F^\ast}=(\nabla_X^F)^\ast$.
Note that $\nabla^F$ and  $ \nabla^{F^\ast} $ are indeed torsion-free,
$$ g(\nabla_X^FY,Z)-g(\nabla_Y^FX,Z)=X^LY^LZ^RF|_M-Y^LX^LZ^RF|_M=[X,Y]^LZ^R|F_M=g([X,Y],Z)\,.$$	
\end{proof}
\begin{corollary}
	The three tensor defined by
$$ T^F(X,Y,Z)=g(\nabla_X^FY-\nabla_X^{F^\ast}Y,Z) $$ is totally symmetric, $T\in\bigodot^3E^\ast$. We have $$g\left(\nabla_X^FY,Z \right)=g\left(\nabla_X^gY,Z \right)+\frac{1}{2}T(X,Y,Z)$$ and $$g\left(\nabla_X^{F^\ast}Y,Z \right)=g\left(\nabla_X^gY,Z \right)-\frac{1}{2}T(X,Y,Z)\,.$$ In particular, $T=0$ if $F=F^\ast$.
\end{corollary}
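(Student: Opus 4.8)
The plan is to prove everything directly from the Theorem just established, treating the groupoid-analytic formulas for $\nabla^F$ and $\nabla^{F^\ast}$ as a black box and working purely with the abstract connection identities they satisfy (torsion-freeness, duality, and $\nabla^g=\frac12(\nabla^F+\nabla^{F^\ast})$). As a preliminary, I would check that $T^F(X,Y,Z)=g(\nabla_X^FY-\nabla_X^{F^\ast}Y,Z)$ really is a tensor: $C^\infty(M)$-linearity in $X$ and in $Z$ is immediate from $\nabla_{fX}=f\nabla$ and from $g$ being a tensor, while in $Y$ the Leibniz terms $({\za}(X)f)Y$ produced by $\nabla^F$ and by $\nabla^{F^\ast}$ are identical and cancel in the difference. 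Hence $T^F\in\Sec(E^\ast\ot E^\ast\ot E^\ast)$, and the real work is to establish total symmetry.

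For invariance under the transposition of the first two arguments I would use that both $\nabla^F$ and $\nabla^{F^\ast}$ are torsion-free. Setting $S(X,Y):=\nabla_X^FY-\nabla_X^{F^\ast}Y$ and subtracting the two identities $\nabla_X^FY-\nabla_Y^FX=[X,Y]$ and $\nabla_X^{F^\ast}Y-\nabla_Y^{F^\ast}X=[X,Y]$, the brackets cancel and I obtain $S(X,Y)=S(Y,X)$. Since $T^F(X,Y,Z)=g(S(X,Y),Z)$, this gives symmetry under $X\leftrightarrow Y$.

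The crux, which I expect to be the main (though short) obstacle, is symmetry under $Y\leftrightarrow Z$, as this is the only step that genuinely uses the metric. Here I would exploit that dualization is an involution: the Theorem gives $\nabla^{F^\ast}=(\nabla^F)^\ast$, and since $(\nabla^\ast)^\ast=\nabla$ for a metric connection, also $\nabla^F=(\nabla^{F^\ast})^\ast$. Thus both
\[
{\za}(X)g(Y,Z)=g(\nabla_X^FY,Z)+g(Y,\nabla_X^{F^\ast}Z)
\]
and
\[
{\za}(X)g(Y,Z)=g(\nabla_X^{F^\ast}Y,Z)+g(Y,\nabla_X^{F}Z)
\]
hold. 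Subtracting them cancels the left-hand sides and yields $g(S(X,Y),Z)=g(Y,S(X,Z))$, i.e. $T^F(X,Y,Z)=T^F(X,Z,Y)$ by symmetry of $g$. Because the transpositions $(12)$ and $(23)$ generate the full symmetric group $S_3$, combining the two symmetries proves that $T^F$ is totally symmetric, so $T^F\in\bigodot^3E^\ast$.

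The comparison with the Levi-Civita connection is then pure algebra. From $\nabla^g=\frac12(\nabla^F+\nabla^{F^\ast})$ I get $\nabla_X^FY-\nabla_X^gY=\frac12(\nabla_X^FY-\nabla_X^{F^\ast}Y)=\frac12 S(X,Y)$ and likewise $\nabla_X^{F^\ast}Y-\nabla_X^gY=-\frac12 S(X,Y)$; pairing with $Z$ via $g$ gives exactly the two displayed formulas $g(\nabla_X^FY,Z)=g(\nabla_X^gY,Z)\pm\frac12 T^F(X,Y,Z)$. Finally, if $F=F^\ast$ then $\nabla^F=\nabla^{F^\ast}$, so $S\equiv0$ and therefore $T^F=0$.
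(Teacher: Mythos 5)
Your proof is correct, but it takes a genuinely different route from the paper's. The paper argues analytically: it chooses coordinates $(x^a,y^i)$ on $\mathcal{G}$ adapted so that $\mathrm{inv}$ acts as $y^i\mapsto -y^i$, notes that $F-F^\ast$ then has vanishing second jet along $M$, and invokes Lemma~\ref{L1} (with $k=2$) to conclude that $T^F(X,Y,Z)=X^LY^LZ^R(F-F^\ast)|_M$ is a well-defined, totally symmetric element of $\bigodot^3\nu^\ast(\mathcal{G},M)$, independent of the choice and order of the representing vector fields; the Taylor expansion $F=\frac{1}{2}g_{ij}y^iy^j+\frac{1}{6}h_{ijk}y^iy^jy^k+o(|y|^3)$ comes out as a by-product and is reused in the following section. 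You instead treat the defining formulas (\ref{b}) as a black box and deduce everything from the three abstract properties established in the Theorem: torsion-freeness of both connections gives symmetry in $(X,Y)$, the involutivity of dualization, $(\nabla^\ast)^\ast=\nabla$, gives symmetry in $(Y,Z)$, and since the transpositions $(12)$ and $(23)$ generate $S_3$ total symmetry follows; the identities involving $\nabla^g$ are then immediate from $\nabla^g=\frac{1}{2}(\nabla^F+\nabla^{F^\ast})$. Both arguments are sound. Yours is more economical and applies verbatim to any pair of dual torsion-free connections on a pseudo-Riemannian Lie algebroid, but it leans essentially on the non-degeneracy of $g$ (both to define $S(X,Y)$ from (\ref{b}) and to invert the duality); the paper's jet argument never uses non-degeneracy, which is precisely what allows $T^F$ to survive for non-metric contrast functions in the section on statistical vector bundles, and it additionally yields the local normal form of $F$ needed there.
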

\begin{proof}
We can extend coordinates $x^a$ on $M$ to coordinates $(x^a,y^i)$ on $\mathcal{G}$ such that $ \mathrm{inv}(y^i)=-y^i $. Then,
$$ F=\frac{1}{2} g_{ij}(x)y^iy^j+o(|y|^2)\,\  \text{and}\,\  F^\ast=\frac{1}{2} g_{ij}(x)y^iy^j+o(|y|^2)\,,$$
so that $F-F^\ast$ has vanishing second jets at points of $M$. Hence, according to Lemma~\ref{L1},
$\tilde{X}\tilde{Y}\tilde{Z}(F-F^\ast)|_M$ does not depend on $\tilde{X}$, $ \tilde{Y} $, $ \tilde{Z} $ and the order of them, if they represent fixed vectors in $ \nu(\mathcal{G},M) $. Thus
$$ F=\frac{1}{2} g_{ij}(x)y^iy^j+\frac{1}{6}h_{ijk}(x)y^iy^jy^k+o(|y|^3)$$ and $$ F^\ast=\frac{1}{2} g_{ij}(x)y^iy^j-\frac{1}{6}h_{ijk}(x)y^iy^jy^k+o(|y|^3)\,,$$ where $h_{ijk}$ are coefficients of $T$.
\end{proof}
\begin{example}\label{ex2}
	Let $F:U(n)\rightarrow\mathbb{R}$ be a function on the unitary group given by $$F(U)=\tr(I-U)(I-U)^\dagger\,.$$ Similarly as in Example~\ref{ex1} we get for $X,Y\in \mathrm{Lie}(U(n))=\mathfrak{u}(n)$, $g^F(X,Y)=-2\tr(XY)$. Here, $T=0$  since $F=F^\ast$. Hence, $\nabla^F$ will be the Levi-Civita connection for $g^F(X,Y)=-2\tr(XY)$. We have
$g\left(\nabla_X^FY,Z \right) =X^LY^LZ^RF_M$. Consequently,
\beas Z^RF(U)&=&\frac{d}{ds}_{|{s=0}}\, F(\mathrm{e^{sZ}U})\\&=&\frac{d}{ds}_{|{s=0}}\, \tr\left(2I-\left( \mathrm{e}^{sZ}U+U^\dagger\mathrm{e}^{sZ}\right) \right)\\&=&\tr(-ZU+U^\dagger Z^\dagger)\,.
\eeas
Similarly, $ Y^LZ^RF(U)=-\tr\left(ZUY+Y^\dagger U^\dagger Z^\dagger\right) $ and
	$$X^LY^LZ^RF(I)=-\tr\left(ZXY+Y^\dagger X^\dagger Z^\dagger\right)=-\tr\left( \left[ X,Y\right] Z\right) =g\left( \frac{1}{2}\left[ X,Y\right] ,Z\right)\,.$$
	Hence, $\nabla_X^FY=\frac{1}{2}\left[ X,Y\right]$.
	
\end{example}

We will call such a $(g,\nabla, \nabla^\ast)$-structure on a Lie algebroid $E=\mathrm{Lie}(\mathcal{G})$, with $ \nabla$ and $\nabla^\ast$ being dual $E$-connections with respect to the metric $g$, a \emph{dualistic structure}.

\section{Statistical vector bundles}
If the contrast function $F$ is not metric, the Levi-Civita connection does not have a clear sense, but both symmetric tensors $g^F$ and $T^F$ are still properly defined. However, they do not depend on the Lie algebroid structure on $E$ as in the case of connections.

Since there is a tubular neighhbourhood of $M$ in $\cG$ which is diffeomorphic with a neighbourhood of $M$ in $E$ and such that $\mathsf{inv}$ acts as the multiplication by $-1$, we can always view contrast functions as defined on a neighbourhood of the zero-section in $E$. In the classical situation, we replace two-point function with a function on the tangent bundle.  Now, we define symmetric tensors $g^F\in\Sec(\bigodot^2E^*)$ and $T^F\in\Sec(\bigodot^3E^*)$ by
$$g^F(X,Y)=\tilde{X}\tilde{Y}F|_M\,,\ T^F(X,Y,Z)=\tilde{X}\tilde{Y}\tilde{Z}(F-F\circ\mathsf{inv})|_M\,.$$
Here, $\tilde{X}\,,\tilde{Y}\,,\tilde{Z}$ are any vector fields on $E$ whose values at points of $M$ coincide with the
values of vertical lifts $X^{\sv},Y^\sv,Z^\sv$ of $X,Y,Z$, respectively. Actually, we can take just the vertical lifts:
\be g^F(X,Y)={X}^\sv{Y}^\sv F|_M\,,\ T^F(X,Y,Z)={X}^\sv{Y}^\sv{Z}^\sv(F-F^*)|_M\,,\ee
where $F^*=F\circ\mathsf{inv}$.
For instance, in affine coordinates $(x^a,y^i)$ on $E$ one has ${X}^\sv=y^i(x)\pa_{y^i}$ for $X(x)=(y^i(x))$.

Thus, we can see contrast functions as defined on $E$, so that we have the associated triple $(E, g^F, T^F)$ consisting of the vector bundle $E$, and symmetric covariant 2- and 3-tensors $g^T$  and $T^F$, respectively.

Such a triple we will call a \emph{statistical vector bundle}. This is with the analogy to the terminology of Lauritzen in \cite{Amari1987} (cf. also \cite{Ciaglia2017}), where statistical vector bundles for $E=\sT M$ and $g^F$ being metric are called \emph{statistical manifolds}. If $g$ is metric, we will speak about \emph{metric statistical vector bundles}, and if $g$ is additionally positively defined -- about \emph{Riemannian statistical vector bundles}.

\begin{proposition} Any statistical vector bundle structure on $E$ comes from a globally defined contrast function on $E$. In particular, if $E=\mathrm{ Lie}(\mathcal{G})$ then any torsion-free dualistic structure on $E=\mathrm{ Lie}(\mathcal{G})$ comes from a global contrast function on $\mathcal{G}$.
\end{proposition}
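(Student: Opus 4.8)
The plan is to realise a prescribed statistical structure by an explicit fibre‑polynomial function, and then to transport it to the groupoid. Throughout, recall that by the preceding Theorem and Corollary a torsion‑free dualistic structure $(g,\nabla,\nabla^\ast)$ on $E=\mathrm{Lie}(\mathcal{G})$ is the same datum as a pair $(g,T)$ with $g$ a (pseudo‑Riemannian) metric and $T\in\Sec(\bigodot^3E^\ast)$ totally symmetric, the connections being recovered from $g$ and $T$ by $g(\nabla_XY,Z)=g(\nabla^g_XY,Z)+\tfrac12 T(X,Y,Z)$ and $g(\nabla^\ast_XY,Z)=g(\nabla^g_XY,Z)-\tfrac12 T(X,Y,Z)$. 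Hence it suffices, in both assertions, to produce a contrast function whose associated tensors are a prescribed symmetric pair $(g,T)$; the second assertion then differs from the first only in that the function must live on $\mathcal{G}$ rather than on $E$.

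For the first assertion I would write down the function directly. Since $g$ and $T$ are global sections, the fibrewise polynomial
\[
F_0(v)\;=\;\tfrac12\,g_{\tau(v)}(v,v)\;+\;\tfrac1{12}\,T_{\tau(v)}(v,v,v),\qquad v\in E,
\]
is a well‑defined smooth function on the total space $E$ (no partition of unity is needed, precisely because $g,T$ are tensors). It vanishes on the zero section $M$ and, having no term linear in the fibre variable, satisfies $dF_0|_M=0$, so it is a contrast function. Evaluating the defining vertical‑lift derivatives exactly as in the proof of the Corollary, the quadratic term yields $g^{F_0}(X,Y)=X^\sv Y^\sv F_0|_M=g(X,Y)$, while, because $F_0\circ\mathsf{inv}$ differs from $F_0$ only by the sign of the cubic term, one gets $F_0-F_0\circ\mathsf{inv}=\tfrac16\,T_{\tau(v)}(v,v,v)$ and hence $T^{F_0}(X,Y,Z)=X^\sv Y^\sv Z^\sv(F_0-F_0\circ\mathsf{inv})|_M=T(X,Y,Z)$; the coefficient $\tfrac1{12}$ is fixed exactly by the normalisation in the definition of $T^F$. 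This proves that every statistical vector bundle structure on $E$ arises from a global contrast function on $E$.

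For the second assertion I would pull $F_0$ back to $\mathcal{G}$. The text provides a tubular neighbourhood $\Phi\colon U\to\mathcal{G}$ of $M$, with $U$ a neighbourhood of the zero section in $E$, such that $\Phi|_M=\mathrm{id}_M$ and $\mathsf{inv}\circ\Phi=\Phi\circ(-\mathrm{id})$. Choose a cutoff $\chi\in C^\infty(\mathcal{G})$ that is $\equiv 1$ on a neighbourhood of $M$, has support inside $\Phi(U)$, and is invariant under $\mathsf{inv}$ (replace $\chi$ by $\tfrac12(\chi+\chi\circ\mathsf{inv})$ if necessary, using that $\mathsf{inv}$ preserves $M$). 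Define $F:=\chi\cdot\big(F_0\circ\Phi^{-1}\big)$ on $\Phi(U)$ and $F:=0$ elsewhere. Then $F$ is a globally defined smooth function on $\mathcal{G}$, vanishing on $M$ with $dF|_M=0$, i.e. a contrast function. Near $M$ one has $\chi\equiv1$, so $F=F_0\circ\Phi^{-1}$ and, by the intertwining property, $F\circ\mathsf{inv}=F_0\circ(-\mathrm{id})\circ\Phi^{-1}=(F_0\circ\mathsf{inv})\circ\Phi^{-1}$ there as well. Since $g^F$ and $T^F$ depend only on the $2$‑ and $3$‑jets of $F$ along $M$ (Lemma~\ref{L1}), they are unchanged by the cutoff and coincide with $g^{F_0}=g$ and $T^{F_0}=T$; correspondingly $\nabla^F=\nabla$ and $\nabla^{F^\ast}=\nabla^\ast$. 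This realises the given torsion‑free dualistic structure by a global contrast function on $\mathcal{G}$.

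The first assertion is essentially immediate once one notices that the fibre‑polynomial may be written invariantly; the real work, and the only genuine obstacle, is the passage from $E$ to $\mathcal{G}$. The crux there is the existence of an $\mathsf{inv}$‑equivariant tubular neighbourhood, on which groupoid inversion becomes fibre multiplication by $-\mathrm{id}$: this is exactly what makes the cubic, inversion‑odd part transport correctly, so that the formula $T^F=X^\sv Y^\sv Z^\sv(F-F\circ\mathsf{inv})|_M$ reduces to the model computation on $E$. Care must also be taken that the cutoff be chosen compatibly with $\mathsf{inv}$ and identically $1$ near $M$, so that truncating $F$ away from the units disturbs neither the contrast condition nor the jets along $M$ that determine $g^F$ and $T^F$.
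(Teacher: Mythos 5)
Your proof is correct and follows essentially the same construction as the paper: a fibrewise polynomial of degree $\le 3$ whose quadratic and cubic coefficients are $g$ and $T$, transported to $\mathcal{G}$ via an $\mathsf{inv}$-equivariant tubular neighbourhood. Two points where your version differs are worth noting, and both are in your favour. First, the paper builds $F$ as $\sum_\alpha\varphi_\alpha F^\alpha$ with a partition of unity subordinate to coordinate charts; you observe that because $g$ and $T$ are global sections of $\bigodot^2E^*$ and $\bigodot^3E^*$, the local expressions $\tfrac12 g_{ij}(x)y^iy^j+c\,h_{ijk}(x)y^iy^jy^k$ already glue to the intrinsic function $\tfrac12 g_{\tau(v)}(v,v)+c\,T_{\tau(v)}(v,v,v)$, so the partition of unity is redundant (your coefficient $\tfrac1{12}$ is the one consistent with the stated normalisation $T^F=X^\sv Y^\sv Z^\sv(F-F\circ\mathsf{inv})|_M$, since three vertical derivatives of $\tfrac16 T(v,v,v)$ yield $6\cdot\tfrac16\,T_{ijk}$). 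Second, the paper's treatment of the groupoid case is a one-line remark that one "carries over the structure" through the tubular neighbourhood $J:E\supset U\to\mathcal{G}$; this only produces a function on $J(U)$, and your $\mathsf{inv}$-compatible cutoff $\chi$, identically $1$ near $M$, is exactly what is needed to obtain a \emph{globally} defined contrast function on $\mathcal{G}$ without disturbing the $2$- and $3$-jets along $M$ that determine $g^F$, $T^F$ and hence $\nabla^F$, $\nabla^{F^*}$. So your argument fills in the globalisation step that the paper leaves implicit.
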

\begin{proof}
Take a statistical vector bundle $(E,g,T)$. Let $ \{U_\alpha \} $ be a locally finite covering of $M$ by coordinate systems $ \left(x^a_\alpha \right)  $ and $ \tilde{U}_\alpha=\pi^{-1}(U_\alpha) $, where $\pi:U\rightarrow M$ is the projection.
In affine coordinates  $ \left(x^a_\alpha,y^i_\alpha \right)  $ on $\tilde{U}_\za$, the tensors $g,T$ take the forms
$$g(x)=\frac{1}{2} g_{ij}(x)\xd y^i_\alpha \xd y^j_\alpha\,,\quad\, T(x)=\frac{1}{6}h_{ijk}(x)\xd y^i_\alpha \xd y^j_\alpha \xd y^k_\alpha\,.$$
Let $\varphi_\alpha\ge 0 $, $\sup\varphi_\alpha\in\tilde{U}_\alpha  $, $ \sum\varphi_\alpha=1 $ on a neighborhood of M, and let us put now $F=\sum\varphi_\alpha F^\alpha$, where $F^\alpha:U^\alpha\rightarrow\mathbb{R}$ and $$F^\alpha(x,y)=\frac{1}{2} g_{ij}(x)y^i_\alpha y^j_\alpha +\frac{1}{6}h_{ijk}(x)y^i_\alpha y^j_\alpha y^k_\alpha\,.$$
 Here, on $U_\alpha$, $g(\partial y^i_\alpha, \partial y^j_\alpha) = g_{ij}(x)$, and $ T(\partial y^i_\alpha,\partial y^j_\alpha,\partial y^k_\alpha)=h_{ijk}(x) $.
For $X,Y$ being sections of $E$ we clearly have $$X^\sv Y^\sv F(x)=\sum_{x\in U_\alpha}\varphi_\alpha(x) X^\sv Y^\sv F^\alpha(x) = \sum_{x\in U_\alpha}\varphi_\alpha(x)g(X,Y)=g(X,Y)\,.$$
Similarly, $X^LY^LZ^L\left(F-F^\ast \right)=T(X,Y,Z)$.

In the case of a torsion free dualistic structure on a Lie groupoid $\mathcal{G}$, we carry over the whole structure on
$E=\mathrm{Lie}(\mathcal{G})$ using a tubular neighborhood $J:E\supset U\rightarrow\mathcal{G}$  of $M$ in $\mathcal{G}$.
\end{proof}
The above proposition is a generalization of the result by  Matumoto \cite{Matumoto1993} telling that any statistical manifold has a two-point contrast function.

\section{Higher contrast functions}

Of course, according to Lemma \ref{L1}, starting with functions $F$ with higher order jets vanishing on a submanifold $M\subset N$ (higher contrast functions), we get higher order symmetric tensors on the normal vector bundle $\zn(N,M)$.

A smooth function on a Lie groupoid $\mathcal{G}\rightrightarrows M$ (or a Lie algebroid $E=\mathrm{Lie}(\mathcal{G})$, which gives analogous results) we call a \emph{contrast function of degree} $k$ if the $k$th jet of $F$ vanishes on $M$, $\mathsf{j}^kF|_M=0$. Then we can define symmetric tensors
$$g_{k+1}^F\in\bigodot^{k+1}E^*=\bigodot^{k+1}\nu^*(\cG,M)\,,\quad T_{k+2}^F\in\bigodot^{k+2}E^*=\bigodot^{k+2}\nu^*(\cG,M)$$
by
\bea g_{k+1}^F(X_1,\dots,X_{k+1})&=&X_1^L\cdots X_{k+1}^L(F)|_M\,,\\
T_{k+1}^F(X_1,\dots,X_{k+2})&=&X_1^L\cdots X_{k+2}^L\left(F+(-1)^kF\circ\mathsf{inv}\right)|_M\,,
\eea
or, on the Lie algebroid,
\bea g_{k+1}^F(X_1,\dots,X_{k+1})&=&X_1^\sv\cdots X_{k+1}^\sv(F)|_M\,,\\
T_{k+1}^F(X_1,\dots,X_{k+2})&=&X_1^\sv\cdots X_{k+2}^\sv\left(F+(-1)^kF\circ\mathsf{inv}\right)|_M\,.
\eea
The existence of $T_{k+1}^F$ easily follows from the fact that $F+(-1)^kF\circ\mathsf{inv}$ is a contrast function of degree $k+1$ if $F$ is of degree $k$. Such a triple $(E,g_{k+1}, T_{k+2})$ we will call a \emph{statistical vector bundle of degree $k$}. As before, any such structure is generated by some contrast function of degree $k$.
In the case $\cG=M\ti M$ we can speak about \emph{statistical manifold of degree $k$}.

Note that the tensor $g^F$ could be understood as `higher metric' if it is non-degenerate in the sense that
$$g_{k+1}^F(X,\cdot,\dots,\cdot)=0\ \Leftrightarrow X=0\,.$$
In other words a `higher metric' would be the symmetric analogue of a multisymplectic form.

Starting with a contrast function $F$ of degree $k$ on a Lie groupoid $\cG\rightrightarrows M$ with the Lie algebroid $E=\mathrm{Lie}(\cG)$, and with the analogy to $H^F(X,Y,Z)=g^F(\nabla_XY,Z)$ (cf. (\ref{b})), we can define a $(k+2)$-linear map
$$H^F_{k+2}:\Sec(E)\ti\cdots\ti\Sec(E)\to C^\infty(M)$$
by
\be H^F_{k+2}(X_1,\dots,X_{k+2})=X_1^L\cdots X_{k+1}^LX^R_{k+2}F|_M\,.\ee
It is easy to see that, for $f\in C^\infty(M)$,
$$H^F_{k+2}(fX_1,\dots,X_{k+2})=fH^F_{k+2}(X_1,\dots,X_{k+2})=\ H^F_{k+2}(X_1,\dots,fX_{k+2})$$
and
$$H^F_{k+2}(X_1,\dots,fX_{l+1},\dots, X_{k+2})=\left({\za}(X_1)+\cdots{\za}(X_l)\right)(f)H^F_{k+2}(X_1,\dots,X_{k+2})\,,$$
so that $H^F_{k+2}$ is a multi-differential operator of the first order.
This looks like a definition of a sort of `higher connection' which is a local invariant (concomitant) of $F$.
Moreover,
\be H^g_{k+2}(X_1,\dots,X_{k+2})=\frac{1}{2}X_1^L\cdots X_{k+1}^LX^R_{k+2}(F-(-1)^kF^*)|_M\ee
depends on $g=g^F$ only. This is a local invariant of the `higher metric' $g^F$ and can be viewed like a definition of a `higher Levi-Civita connection'. The closer study of the concomitant $H^g_{k+2}$ we postpone, however, to a separate paper.

\section{Reductions: contrast functions on $G$ groupoids}
To describe an example of a reduction of a contrast function, consider a principal action of a Lie group $G$ on a Lie groupoid $\mathcal{G}\rightrightarrows M$. Such a structure is called in \cite{Bruce2015} a $G$-\emph{groupoid} if $G$ acts on $\mathcal{G}$ by groupoid isomorphisms. The concept of a $G$-groupoid is essentially of double nature: a $G$-groupoid is a $G$-principal bundle object in the category of Lie groupoids. Similarly, \emph{$G$-algebroids} are Lie algebroids with a principal action of $G$ by Lie algebroid isomorphisms. It is easy to see that the Lie algebroid $E=\mathrm{Lie}(\cG)$ of a $G$-groupoid is canonically a $G$-algebroid.

Let us recall that, for $\mathcal{G}_{i} \rightrightarrows  M_{i}$ $(i=1,2)$ being a pair of Lie groupoids, a \emph{Lie groupoid morphism} is a pair of maps $(\Phi, \phi)$ such that the following diagram is commutative\smallskip

\begin{tabular}{p{5cm} p{10cm}}
\begin{xy}
(0,20)*+{\mathcal{G}_{1}}="a"; (20,20)*+{\mathcal{G}_{2}}="b";%
(0,0)*+{M_{1}}="c"; (20,0)*+{M_{2}}="d";%
{\ar "a";"b"}?*!/_2mm/{\Phi};
{\ar@<1.ex>"a";"c"} ;
{\ar@<-1.ex> "a";"c"} ?*!/^3mm/{\mathsf{s}_{1}} ?*!/_6mm/{\mathsf{t}_{1}};
{\ar@<1.ex>"b";"d"};%
{\ar@<-1.ex> "b";"d"}?*!/^3mm/{\mathsf{s}_{2}} ?*!/_6mm/{\mathsf{t}_{2}};  %
{\ar "c";"d"}?*!/^3mm/{\phi};
\end{xy}
&
\vspace{-60pt}
\noindent in the sense that\smallskip

\noindent $\mathsf{s}_{2}\circ \Phi = \phi \circ \mathsf{s}_{1},  \hspace{20pt}\textnormal{and} \hspace{20pt} \mathsf{t}_{2}\circ \Phi = \phi \circ \mathsf{t}_{1} $,
\end{tabular}
\smallskip

\noindent subject to the further condition that $\Phi$ respects the (partial) multiplication; if $a,b \in \mathcal{G}_{1}$ are composable, then  $ \Phi(ab) = \Phi(a)\Phi(b)$. It then follows that for $x \in M_{1}$  we have $\Phi(\id_{x}) = \id_{\phi(x)}$ and  $\Phi(a^{-1}) = \Phi(a)^{-1}$. Like in the classical Lie Theory, morphisms of Lie groupoids induce morphisms of the corresponding Lie algebroids (see \cite{Mackenzie2005}).

\medskip
For a $G$ groupoid $\mathcal{G}\rightrightarrows M$,

\begin{enumerate}
\item The action of $G$ on $\mathcal{G}$ commutes with the source and target maps, thus projects onto a $G$-action on the manifold $M$. Moreover, $M$ as an immersed submanifold of $\cG$ is invariant with respect to the $G$-action, and the projected and restricted actions coincide.
\item As the action of $G$ on $\mathcal{G}$ is principal, it is also principal on the immersed submanifold $M$, so $M$ inherits a structure of a principal $G$-bundle.   It is important to note that $M$ is G-invariant. In particular, the quotient manifold $M_0=M/G$ exists.
 \item The reduced manifold $\cG/G=\mathcal{G}_{0}$ is a Lie groupoid $\cG/G=\mathcal{G}_{0} \rightrightarrows M/G=M_{0}$, with the set of units $M_0$, defined by the following structure:
\end{enumerate}
\begin{tabular}{p{5cm} p{5cm}}
\begin{xy}
(0,20)*+{\mathcal{G}}="a"; (20,20)*+{\mathcal{G}_{0}}="b";%
(0,0)*+{M}="c"; (20,0)*+{M_{0}}="d";%
{\ar "a";"b"}?*!/_2mm/{\pi};
{\ar@<1.ex>"a";"c"} ;
{\ar@<-1.ex> "a";"c"} ?*!/^3mm/{\mathsf{s}} ?*!/_6mm/{\mathsf{t}};
{\ar@<1.ex>"b";"d"};%
{\ar@<-1.ex> "b";"d"}?*!/^3mm/{{\zs}} ?*!/_6mm/{{\zt}};  %
{\ar "c";"d"}?*!/^3mm/{\rmp};
\end{xy}
&\vspace{-80pt}
{\begin{align*}
 &{\zs} \circ \pi = \rmp \circ \mathsf{s}\,, \\
 &{\zt} \circ \pi = \rmp \circ \mathsf{t}\,,\\
 & \id_{\rmp(x)}=\zp(\id_x)\quad\text{for all}\quad x\in M\,,\\
 & \pi(a)^{-1}=\pi(a^{-1})\quad\text{for all}\quad a\in\cG\,,\\
 &\pi(a a') = \pi(a) \pi(a')\quad\text{for all}\quad(a,a') \in \mathcal{G}^{(2)}\,,
\end{align*}}
\end{tabular}

\noindent where $\pi:\cG\to\cG_0$ is the canonical projection.
% The source map $\zs$ is clearly a submersion. Indeed, as $\rmp$ and $s$ are submersions,
% their composition $\rmp\circ s=\sigma\circ\pi$ is a submersion and therefore $\sigma$
% is also a submersion.
In fact, the above constructions imply, tautologically, that $(\pi, \rmp) : \mathcal{G} \rightrightarrows M \rightarrow \mathcal{G}_{0}\rightrightarrows M_{0}$ is a morphism of Lie groupoids with the above structures. The fundamental fact in the Lie theory of groupoids says that any morphism $\Phi$ of Lie groupoids induces a morphism $\mathrm{Lie}(\Phi)$ of the corresponding Lie algebroids. It is derived from the map $\sT\Phi|_M$. In our case,
$$(\Pi,\rmp)=\mathrm{Lie}(\pi,\rmp):E=\mathrm{Lie}(\cG)\to E_0=\mathrm{Lie}(\cG_0)$$
is covering the map $\rmp:M\to M_0=M/G$. It is easy to see that $E_0=E/G$. This Lie algebroid morphism defines in turn the pul-backs of symmetric forms
\be \Pi^*:\Sec\left(\bigodot^kE^*_0\right)\to\Sec\left(\bigodot^kE^*\right)\,, \ \Pi^*(\zw)_x(X_1,\cdots,X_k)=
\zw(\sT_x\Pi(X_1),\cdots, \sT_x\Pi(X_1))\,.
\ee
Note that a morphisms of vector bundles do not induce, in general, maps of the corresponding sections. This makes the definition of a Lie algebroid morphism non-trivial.

\begin{example}\label{e1}
Let $P\to M$ be a $G$-principal bundle with the right action $$P\ti G\to P\,,\quad(a,\zg)\mapsto a\zg\,.$$
Then, the pair Lie groupoid $P\ti P$ is a $G$-groupoid with respect to the action $(a,b)\zg\mapsto (a\zg,b\zg)$ and we have the corresponding morphism of Lie groupoids
$$\begin{xy}
(0,20)*+{P\ti P}="a"; (40,20)*+{(P\ti P)/G}="b";%
(0,0)*+{P}="c"; (40,0)*+{P/G=M}="d";%
{\ar "a";"b"}?*!/_2mm/{\pi};
{\ar@<1.ex>"a";"c"} ;
{\ar@<-1.ex> "a";"c"} ?*!/^3mm/{\mathsf{s}} ?*!/_6mm/{\mathsf{t}};
{\ar@<1.ex>"b";"d"};%
{\ar@<-1.ex> "b";"d"}?*!/^3mm/{{\zs}} ?*!/_6mm/{{\zt}};  %
{\ar "c";"d"}?*!/^3mm/{\rmp};
\end{xy}
$$
The Lie groupoid $(P\ti P)/G\rightrightarrows M$ is called the \emph{Atiyah groupoid} of the principal bundle $P\to M$
The corresponding Lie algebroid $$E_0=\mathrm{Lie}((P\ti P)/G)= \sT P/G$$ is the Atiyah algebroid with the \emph{Atiyah exact sequence} of Lie algebroid morphisms
$$0\to K\to \sT P/G\to \sT M\to 0\,,$$ where $P/G\to \sT M$ is the (surjective in this case) anchor map and $K$ is its kernel, a bundle of Lie algebras isomorphic to $\mathfrak{g}=\mathrm{Lie}(G)$. The map $\Pi^*$ identifies sections of $\sT^*P/G$ with $G$-invariant $1$-forms on $P$.
\end{example}
The structure of a $G$-groupoid is described in \cite{Bruce2015}. For simplicity, we present the result for trivial $G$-structures (all $G$-groupoids are locally trivial).
\begin{theorem}\label{trivialsplit} For any $G$-groupoid structure on the trivial $G$-bundle $\cG=\cG_0\ti G$ there is a Lie groupoid structure on $\cG_0$
with the source and target maps $\zs,\zt:\cG_0\to M_0$ and a groupoid morphism $\rmb:\cG_0\to G$ such that the source map $s$, the target map $t$ and the partial multiplication in $\cG$ read
\be\label{Gg}s(y_0,\zg)=(\zs(y_0), \zg)\,,\quad t(y_0,\zg)=(\zt(y_0),\rmb(y_0)\zg)\,,\quad
(y_0,\zg)(y'_0,\zg')=(y_0y'_0,\zg')\,.
\ee
\end{theorem}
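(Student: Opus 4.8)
The plan is to understand the structure of a $G$-groupoid on a trivial bundle $\cG = \cG_0 \times G$ by exploiting the \emph{double nature} of the object: it is simultaneously a Lie groupoid $\cG \rightrightarrows M$ and a $G$-principal bundle, with the two structures compatible in that $G$ acts by groupoid isomorphisms. Since the bundle is trivial, I would write a general element of $\cG$ as a pair $(y_0, \zg)$ with $y_0 \in \cG_0$ and $\zg \in G$, where the $G$-action is simply right translation in the second factor, $(y_0,\zg)\cdot h = (y_0, \zg h)$. The base $M$ is itself a trivial $G$-bundle (by items 1--2 of the preceding discussion), so I may write $M = M_0 \times G$ and arrange the identification $\cG_0 = \cG/G$, $M_0 = M/G$ to be the projections onto the first factor.

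First I would determine the source and target. Because $\mathsf{s},\mathsf{t}$ are $G$-equivariant (the action commutes with source and target, projecting to the $G$-action on $M$), each must have the form $(y_0,\zg)\mapsto (\,\cdot\,, ?\zg)$ where the $G$-component transforms equivariantly. Equivariance of $\mathsf s$ forces $s(y_0,\zg) = (\zs(y_0), \zg)$ for a well-defined map $\zs:\cG_0\to M_0$ on the quotient, since the $G$-component of the source of $(y_0,e)$ must be the identity (units sit in $M$ compatibly). For the target, equivariance only pins down the $G$-component up to a factor depending on the quotient variable $y_0$, giving $t(y_0,\zg) = (\zt(y_0), \rmb(y_0)\zg)$ for some map $\rmb:\cG_0\to G$ and $\zt:\cG_0\to M_0$. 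The nontrivial output here is the ``cocycle'' $\rmb$, which measures the failure of the target's $G$-component to be trivial; it is exactly what carries the extra data of the $G$-groupoid.

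Next I would analyze the multiplication. Two elements $(y_0,\zg)$ and $(y'_0,\zg')$ are composable iff their projections are composable in the quotient \emph{and} the $G$-components match through source/target, i.e. $\zg = \rmb(y'_0)\zg'$ read off from $s(y_0,\zg) = t(y'_0,\zg')$. Using $G$-equivariance of the multiplication together with associativity and the unit/inverse axioms, I would show the product has the form $(y_0,\zg)(y'_0,\zg') = (y_0 \cdot y'_0, \zg')$, where $y_0\cdot y'_0$ is a partial multiplication on $\cG_0$. The key checks are that this $\cG_0$-multiplication is associative and admits units and inverses --- which I would deduce directly from the corresponding axioms in $\cG$ by projecting along $\pi$ --- so that $(\cG_0 \rightrightarrows M_0, \zs, \zt)$ is genuinely a Lie groupoid. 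Compatibility of $\rmb$ with the source/target constraints then forces $\rmb(y_0 \cdot y'_0) = \rmb(y_0)\rmb(y'_0)$ whenever $y_0, y'_0$ are composable, i.e. $\rmb:\cG_0\to G$ is a groupoid morphism (to $G$ viewed as a one-object groupoid), and $\rmb$ restricted to units is trivial.

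The main obstacle I expect is disentangling how much freedom the target map genuinely has and proving that the factor $\rmb$ is well-defined on the quotient $\cG_0$ (independent of the chosen representative $\zg$) and is a \emph{morphism} rather than merely a function. This amounts to a careful bookkeeping of $G$-equivariance combined with the associativity of composition: one must verify that the consistency conditions $\mathsf t(g_1 \circ g_2) = \mathsf t(g_1)$ and $\mathsf s(g_1 \circ g_2) = \mathsf s(g_1)$, when written in the $(y_0,\zg)$ coordinates, simultaneously yield the product formula and the multiplicativity of $\rmb$. Everything else --- smoothness, the surjectivity of $\zs,\zt$, and the groupoid axioms on $\cG_0$ --- follows by transporting the corresponding properties of $\cG$ through the smooth submersion $\pi:\cG\to\cG_0$, so those steps are routine once the coordinate form \eqref{Gg} is established.
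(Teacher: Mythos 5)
First, a remark on the comparison itself: the paper does not actually prove Theorem \ref{trivialsplit} — it is quoted from \cite{Bruce2015} ("The structure of a $G$-groupoid is described in \cite{Bruce2015}...") — so your attempt can only be judged on its own terms. Your overall strategy is the natural one: use $G$-equivariance of $\mathsf{s}$, $\mathsf{t}$, $\mathrm{inv}$ and of the multiplication to peel off the $G$-factor, then read the groupoid axioms for $\cG_0$ and the multiplicativity of $\rmb$ off the identities $\mathsf{s}(g_1\circ g_2)=\mathsf{s}(g_2)$ and $\mathsf{t}(g_1\circ g_2)=\mathsf{t}(g_1)$. The second half of your outline (the composability condition $\zg=\rmb(y_0')\zg'$, the product formula, $\rmb(y_0y_0')=\rmb(y_0)\rmb(y_0')$, and transporting associativity, units and inverses along $\pi$) does go through as you describe.

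The gap is in your very first step. Equivariance of the source map only gives $s(y_0,\zg)=(\zs(y_0),\mathrm{a}(y_0)\zg)$ for some smooth $\mathrm{a}:\cG_0\to G$ — exactly the same general form as the target — and your justification for $\mathrm{a}\equiv e$ ("the $G$-component of the source of $(y_0,e)$ must be the identity because units sit in $M$ compatibly") only yields $\mathrm{a}(x)=e$ for $x$ in the unit submanifold $M_0\subset\cG_0$, since $\mathsf{s}(u)=u$ holds only for units $u$. For a general arrow $y_0$ there is no reason, in an arbitrarily given trivialization $\cG\cong\cG_0\ti G$, for the $G$-component of $\mathsf{s}(y_0,e)$ to be trivial. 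The missing idea is that the trivialization must be \emph{adapted to the source map}: replace the given one by $g\mapsto\bigl(\pi(g),\mathrm{pr}_G(\mathsf{s}(g))\bigr)$ (equivalently, apply the gauge transformation $(y_0,\zg)\mapsto(y_0,\mathrm{a}(y_0)\zg)$), which is again a $G$-equivariant trivialization and by construction kills the source cocycle $\mathrm{a}$, leaving only the target cocycle $\rmb$. This normalization is not cosmetic: without it the composability condition reads $\zg=\mathrm{a}(y_0)^{-1}\rmb(y_0')\zg'$ and the multiplication acquires an extra factor $\mathrm{a}(y_0y_0')^{-1}\mathrm{a}(y_0')$ in the $G$-slot, so the asserted formulas \eqref{Gg} simply fail in the original coordinates. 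Once you insert this step, the rest of your argument is correct.
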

\noindent Let us see what is the structure of $E=\mathrm{Lie}(\cG)=\mathrm{Lie}(\cG_0)\ti G$.

First, according to the decomposition $E=E_0\ti G\to M_0\ti G$, we can view sections $E$ as $\zg$-dependent sections of $E_0$ with the identification of sections $X$ of $E_0$ as $\zg$-independent sections ${\bX}$ of $E$, i.e. $\bX(x,\zg)=X(x)$. Since the $\zg$-independent sections generate the module $\Sec(E)$ over $C^\infty(M_0\ti G)$, the bracket in $\Sec(E)$ is completely determined by the bracket of the $\zg$-independent sections and the anchor map. Hence, the left (resp., right) invariant vector fields on $\cG$ are spanned by $f^L\bX^L$ (resp., $f^R\bX^R$), where $f\in C^\infty(M_0\ti G)$.

Now, using the decomposition $\sT\cG=\sT\cG_0\ti \sT G$, by straightforward calculations we obtain:
\begin{proposition}\label{p1}
For $X\in\Sec(E_0)$, $f\in C^\infty(M_0\ti G)$, we have in the groupoid $\cG=\cG_0\ti M$:
\beas\mathrm{inv}(a,\zg)&=&(a^{-1},b(a)\zg)\,,\ \bX^R(a,\zg)=X^R(a)\,,\ \bX^L(a,\zg)=X^L(a)+\sT_{a^{-1}}b(X^R(a^{-1}))\zg\,,\\
\za_\cG(\bX)&=&\za_{\cG_0}(X)+\mathrm{Lie}(b)(X)^L_G\,,\ f^L(a,\zg)=f(\zs(a),\zg)\,, \ f^R(a,\zg)=f(\zt(a), b(a)\zg)\,.
\eeas
Here, of course, for $Y\in\sT_eG$, the symbol $Y\zg$ denotes the right translation of the vector $Y$ by $\zg\in G$.
\end{proposition}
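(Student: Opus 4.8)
The plan is to extract every formula from the explicit structure maps of Theorem~\ref{trivialsplit} together with the general relations between $X^L$, $X^R$ and $\za(X)$ from Section~\ref{La_of_Lg}. Two of the statements are immediate: since $f^L=f\circ s$ and $f^R=f\circ t$, substituting $s(a,\zg)=(\zs(a),\zg)$ and $t(a,\zg)=(\zt(a),\rmb(a)\zg)$ gives $f^L(a,\zg)=f(\zs(a),\zg)$ and $f^R(a,\zg)=f(\zt(a),\rmb(a)\zg)$ at once. The inversion formula is purely algebraic: writing $\mathrm{inv}(a,\zg)=(a',\zg')$ and demanding that $(a,\zg)(a',\zg')=(aa',\zg')$ equal the unit $\id_{t(a,\zg)}=(\id_{\zt(a)},\rmb(a)\zg)$ forces $a'=a^{-1}$ and $\zg'=\rmb(a)\zg$; composability $s(a,\zg)=t(a',\zg')$ is then checked using that $\rmb$ is a groupoid morphism, so $\rmb(a^{-1})=\rmb(a)^{-1}$.

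For the right-invariant field I would simply verify that $(a,\zg)\mapsto X^R(a)$, with vanishing $G$-component, is the answer. The source fibres of $\cG$ are honest products, $s^{-1}(x_0,\zg)=\zs^{-1}(x_0)\times\{\zg\}$, so this field is tangent to them; and since the right translation $R_{(a',\zg')}$ acts on the first factor exactly as the right translation $R_{a'}$ of $\cG_0$ while freezing the $G$-coordinate, right-invariance of $X^R$ on $\cG_0$ lifts verbatim to right-invariance of $\bX^R$ on $\cG$.

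The one genuinely delicate point is $\bX^L$, and the obstacle is precisely that the target fibres of $\cG$ are \emph{not} products but are twisted by $\rmb$: a point of $t^{-1}(x_0,\zg)$ has the form $(a',\rmb(a')^{-1}\zg)$, so its $G$-coordinate moves with $a'$. Hence the lift to $\ker\sT t$ of a section $X$ of $E_0$ is not horizontal but acquires a $G$-component of the form $-\sT R_\zg(\sT\rmb(\cdot))$, and the left-invariant extension inherits a nonzero vertical part along $G$. I would compute $\bX^L(a,\zg)$ by either of two routes and use their agreement as a check. Route (i): directly apply $\sT L_{(a,\zg)}$ to this lift at $\id_{s(a,\zg)}$, differentiating a curve $(a'(\epsilon),\rmb(a'(\epsilon))^{-1}\zg)$ tangent to the target fibre with $\dot a'(0)=X^L(\id_{\zs(a)})$; the first factor reproduces $X^L(a)$ by left-invariance on $\cG_0$, while the $G$-factor comes from differentiating $\rmb(a'(\epsilon))^{-1}$. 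Route (ii): use the relation $X^L\sim_{\mathrm{inv}}-X^R$ of~(\ref{a}) together with the already-computed $\mathrm{inv}$ and $\bX^R$ and the companion identity $X^R\sim_{\mathrm{inv}}-X^L$ in $\cG_0$. The main work in both routes is the $G$-component, which is governed by $\sT_{a^{-1}}\rmb(X^R(a^{-1}))$; using the morphism identity $\rmb\circ R_{a^{-1}}=R^{G}_{\rmb(a)^{-1}}\circ\rmb$ together with $\sT_{\id_x}\rmb|_{\sT M_0}=0$, this reduces to the Lie-algebroid map $\mathrm{Lie}(b)(X)$ right-translated into $T_\zg G$, which is the expected bottleneck and the step where the group translations must be tracked with care.

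Finally the anchor is read off from the unit relation $\za(\bX)\sim_\imath \bX^L-\bX^R$: evaluating both invariant fields at $(\id_{x_0},\zg)$, the $\cG_0$-components differ by $\za_{\cG_0}(X)(x_0)$ while the $G$-components assemble into the vertical vector field on $G$ generated by $\mathrm{Lie}(b)(X)$, namely the term $\mathrm{Lie}(b)(X)^L_G$ of the statement. I expect no difficulty beyond the $\bX^L$ bookkeeping; everything else is substitution into the structure maps of Theorem~\ref{trivialsplit}.
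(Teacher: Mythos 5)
Your proposal is correct and follows essentially the same route as the paper, which offers no written proof beyond the remark that the formulas follow ``by straightforward calculations'' from the decomposition $\sT\cG=\sT\cG_0\ti\sT G$ and the structure maps of Theorem~\ref{trivialsplit}; your plan is exactly that calculation, and you rightly isolate the only nontrivial point, namely that the target fibres are twisted by $\rmb$ so that $\bX^L$ acquires a $G$-component governed by $\sT\rmb$ and the morphism identity $\rmb(a^{-1})=\rmb(a)^{-1}$, with the cross-check via $X^L\sim_{\mathrm{inv}}-X^R$ being a sensible safeguard.
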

Note that the above proposition implies easily that $[\bX,\bX'](x,g)=[X,X'](x)$, so $\zg$-independent sections of $E$ commute as they representatives in $\Sec(E_0)$, and the knowledge of the anchor completely determines the Lie bracket in $\Sec(E)$.

\begin{proposition}
Assume that $F:\cG\to\R$ is a contrast function on a $G$ groupoid $\mathcal{G}\rightrightarrows M$ and $\pi:\cG\to\cG_0=\cG/G$ is the corresponding groupoid morphism.
If $F:\cG\to\R$ is $G$-invariant contrast function, then $F$ induces a contrast function $\hat{F}$ on $\cG_0=\cG/G$ and
$$g^F=\mathrm{Lie}(\pi)^*(g^{\hat{F}})\quad\text{and}\quad T^F=\mathrm{Lie}(\pi)^*(T^{\hat{F}})\,.$$
In other words,
\be\label{red}g^F(\bX,\bY)(x,\zg)=g^{\hat{F}}(X,Y)(x)\ \text{and}\ T^F(\bX,\bY,\bZ)(x,\zg)=T^{\hat{F}}(X,Y,Z)(x)\,.\ee
Moreover, if $F$ is metric, then $\hat{F}$ is metric and
\be\label{red1}\nabla^F_\bX\bY=\mathbf{V}\,, \ \text{where}\  V=\nabla^{\hat{F}}_XY\,.
\ee
\end{proposition}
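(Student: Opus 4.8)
The plan is to descend $F$ along the quotient map $\pi$ and then push every computation through the $\pi$-relatedness of the invariant vector fields.

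First I would construct $\hat F$. Since $F$ is $G$-invariant and $\pi:\cG\to\cG_0=\cG/G$ is precisely the quotient by the $G$-action, there is a unique smooth $\hat F:\cG_0\to\R$ with $F=\hat F\circ\pi$. To see that $\hat F$ is a contrast function I use that $\pi$ restricts on units to the principal bundle projection $\rmp:M\to M_0$, so $F|_M=0$ forces $\hat F|_{M_0}=0$; and from $\pi^*(d\hat F)=dF$, evaluated at $m\in M$, together with $dF|_M=0$ and the surjectivity of $\mathsf{T}_m\pi$ (as $\pi$ is a surjective submersion), I get $d\hat F|_{M_0}=0$.

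The technical heart is the relatedness of the invariant lifts. For $X\in\Sec(E_0)$ let $\bX\in\Sec(E)$ be its $\zg$-independent lift as in Proposition~\ref{p1}; under the identification $E_0=E/G$ the section $\bX$ is $\mathrm{Lie}(\pi)$-related to $X$, i.e. $\mathrm{Lie}(\pi)(\bX(m))=X(\rmp(m))$. The fundamental fact that a groupoid morphism intertwines the invariant extensions of $\mathrm{Lie}(\pi)$-related sections (using that $\pi$ respects source, target and multiplication) then yields $\bX^L\sim_\pi X^L$ and $\bX^R\sim_\pi X^R$, where $\tilde X\sim_\pi\tilde Y$ means $\pi$-relatedness. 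Consequently, for any $h\in C^\infty(\cG_0)$ one has $\bX^L(h\circ\pi)=(X^Lh)\circ\pi$, and likewise for $\bX^R$. With these relations the tensor identities are a short chain-rule computation: writing $F=\hat F\circ\pi$ and iterating the relatedness,
\[
\bX^L\bY^LF=(X^LY^L\hat F)\circ\pi,
\]
so restricting to $M$ and using $\pi|_M=\rmp$ gives $g^F(\bX,\bY)(x,\zg)=g^{\hat F}(X,Y)(x)$, which is exactly $g^F=\mathrm{Lie}(\pi)^*g^{\hat F}$. For $T^F$ I first note that, $\pi$ being a groupoid morphism, $\pi\circ\mathrm{inv}=\mathrm{inv}\circ\pi$, whence $F\circ\mathrm{inv}=(\hat F\circ\mathrm{inv})\circ\pi$ and $F-F\circ\mathrm{inv}=(\hat F-\hat F\circ\mathrm{inv})\circ\pi$; the same computation then gives $T^F(\bX,\bY,\bZ)(x,\zg)=T^{\hat F}(X,Y,Z)(x)$.

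Finally, the metric case. The map $\mathrm{Lie}(\pi):E\to E_0$ covers $\rmp$ and, because the $G$-action on $E$ is principal with $E_0=E/G$ of the same rank as $E$, it is a fibrewise isomorphism. Hence $g^F=\mathrm{Lie}(\pi)^*g^{\hat F}$ is nondegenerate if and only if $g^{\hat F}$ is, so $F$ metric forces $\hat F$ metric. For the connection I use the defining relation (\ref{b}): by the relatedness of both $\bY^L$ and $\bZ^R$,
\[
g^F(\nabla^F_{\bX}\bY,\bZ)|_M=\bX^L\bY^L\bZ^RF|_M=\big(X^LY^LZ^R\hat F\big)|_{M_0}\circ\rmp=g^{\hat F}(\nabla^{\hat F}_XY,Z)\circ\rmp\,.
\]
By (\ref{red}) the right-hand side equals $g^F(\mathbf V,\bZ)$ with $V=\nabla^{\hat F}_XY$; since $g^F$ is nondegenerate and $\bZ$ is arbitrary, I conclude $\nabla^F_{\bX}\bY=\mathbf V$. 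I expect the main obstacle to be the second step: carefully justifying that the groupoid morphism $\pi$ intertwines the invariant extensions, so that $\bX^L\sim_\pi X^L$ and $\bX^R\sim_\pi X^R$, and that $\mathrm{Lie}(\pi)$ is a fibrewise isomorphism. Everything else then reduces to the chain rule, Lemma~\ref{L1}, and the nondegeneracy of $g^F$.
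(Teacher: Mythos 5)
Your proof is correct and takes essentially the same route as the paper's: descend $F$ to $\hat F$, observe that the invariant lifts $\bX^L,\bX^R$ push forward under $\pi$ to $X^L,X^R$, and then run the defining formulas for $g^F$, $T^F$ and $\nabla^F$ through the chain rule and the nondegeneracy of $g^F$. The only difference is in how the relatedness is justified: the paper reads it off the explicit trivialization $\cG=\cG_0\ti G$ of Proposition~\ref{p1} (the lifts differ from $X^L,X^R$ by vectors tangent to the $G$-orbits, which annihilate the $G$-invariant $F$), whereas you invoke the intrinsic fact that a groupoid morphism intertwines the invariant extensions of $\mathrm{Lie}(\pi)$-related sections -- an equivalent and, if anything, slightly more coordinate-free justification.
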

\begin{proof}
According to Proposition \ref{p1}, $\bX^L$ and $\bX^R$ differ from $X^L$ and $Y^L$ (viewed as vector fields on $M_0\ti G$) by vector fields tangent to orbits of $G$. As $F$ is $G$-invariant and
$$F^*(a,\zg)=F\circ\mathrm{inv}_\cG (a,\zg)=F(a^{-1},b(a)\zg)=F(a^{-1},e)=\hat{F}\circ\mathrm{inv}_{\cG_0}(a)=\hat{F}^*(a)\,,$$
the first jets of $F$ along $M=M_0\ti G$ and $\hat{F}$ along $M_0$ are trivial, and
$$\bX^L\bY^LF(x,\zg)=X^LY^L\hat{F}(x)\,,\ \bX^L\bY^L\bZ^L(F-F^*)(x,\zg)=X^LY^LZ^L(\hat{F}-\hat{F}^*)(x)\,,$$
so (\ref{red}) follows. Similarly,
$$g^F(\nabla^F_\bX\bY,\bZ)(x,\zg)=\bX^L\bY^L\bZ^RF(x,\zg)=X^LY^LZ^R\hat{F}(x)=g^{\hat{F}}(\nabla^{\hat{F}}_XY,Z)\,,
$$
whence (\ref{red1}).
\begin{example}
Let $P$ be the $\SO(n)$-principal bundle of oriented orthonormal frames on the sphere $S^n$ canonically embedded in $\R^{n+1}$ as the unit sphere. In other words, $P$ is the orhonormal frame bundle of $\sT S^n$ with the canonical Riemannian metric. We will view $P$ as the set of pairs $(x,r)$, where $x\in S^n$ and $r:\sT_xS^n\to \R^n$ is an isometry respecting the orientations. Note that elements $(x,r)\in P$ can be identified with orthonormal frames in $\R^{n+1}$ in the obvious way. This implies that $P$ is simultaneously a homogeneous space of the group $\SO(n+1)$ acting freely on $P$ (transitive $\SO(n+1)$-principal bundle). The group $\SO(n)$ acts as a subgroup of $\SO(n+1)$ with respect to the embedding $\SO(n)\ni\zg\mapsto \bar{\zg}\in\SO(n+1)$,
$$ \bar{\zg}=
  \left[ {\begin{array}{cc}
   1 & 0 \\
   0 & \zg \\
  \end{array} } \right]\,.
  $$

Consider the pair groupoid $\cG=P\ti P\rightrightarrows P$. To every pair
$(p,p')\in P\ti P$ we can associate a matrix $A({p,p'})$ which is the unique matrix in $\SO(n+1)$ which maps the oriented orthonormal frame $p=(x,r)$ of $\R^{n+1}$ onto $p'=(x',r')$. This defines the map $A:P\ti P\to SO(n+1)$ and, in turn, a diffeomorphism
$$\Phi:P\ti P\to P\ti SO(n+1)\,, \ \Phi(p,p')=(p,A(p,p'))$$
which identifies the diagonal $\zD\simeq P$ in $P\ti P$ with $P\ti\{ I\}\subset P\ti\SO(n+1)$.
The normal bundle $\zn(P\ti P,\zD)$ of $\zD$ in $P\ti P$, thus $\sT P$, can be therefore identified with $P\ti\so(n+1)$. With this identification, element $(p,X)\in P\ti\so(n+1)$ corresponds to the vector $\tilde{X}(p)\in\sT_pP$ which is tangent to the curve $t\mapsto p\exp(tX)\in P$ at $t=0$. Hence, $\tilde{X}$ is the fundamental vector field of the $\SO(n+1)$-action on $P$, corresponding to $X\in\so(n+1)$.

Define now the two-point function $F:P\ti P\to\R$
$$F(p,p')=\frac{1}{4}\tr\left(2I-A(p,p')-A(p,p')^t\right)\,.$$
This function is a contrast function: it vanishes on the diagonal
and
$$\frac{d}{dt}_{|t=0}F(p,p\exp(tX))=\frac{1}{4}\frac{d}{dt}_{|t=0}\tr\left(2I-\exp(tX)-\exp(-tX)\right)=\frac{1}{4}\tr(-X+X)=0\,.$$
We obtain the metric $g^F$ on the normal bundle $\sT P\simeq P\ti\so(n+1)$ similarly like in Example \ref{ex1},
\be\label{met}g^F(p)(\tilde{X},\tilde{Y})=\frac{1}{2}\tr(XY^t)=-\frac{1}{2}\tr(XY)\,.\ee
The function $F$ is invariant with respect to the inversion $\mathrm{inv}(p,A)=(p,A^{-1})$, so $T^F=0$.
Like in Example \ref{ex2} we obtain the Levi-Civita connection in the form
$$\nabla^F_{\tilde{X}}\tilde{Y}=\frac{1}{2}\widetilde{[X,Y]}\,,$$
where the bracket is that in the Lie algebra $\so(n+1)$.
%The classical version of the connection we get identifying
%$X,Y,[X,Y]\in\so(n+1)$ with vector fields on $P$ \emph{via} the identification $\sT P\simeq P\ti\so(n+1)$.

Let us observe now (cf. Example \ref{e1}) that $P\ti P$ is a $SO(n)$-groupoid with respect to the obvious action
$$((x,r),(x',r'))\zg=((x,r\zg),(x',r'\zg))=\left((x,r)\bar{\zg},(x',r')\bar{\zg}\right)\,.$$
This action, in the identification $P\ti P\simeq P\ti\SO(n+1)$, looks like
$$(p,A)\zg=(p\zg,\bar{\zg}^{-1}A\bar{\zg})\,,$$
hence $\SO(n)$ acts on $\sT P\simeq P\ti\so(n+1)$ \emph{via}
$$(p,X)\zg=\left(p\zg, \Ad_{\bar{\zg}}^{-1}(X)\right)\,,$$
so the vector field $(p,X(p))$ on $P$ is $\SO(n)$-invariant if $X(p\zg)=\Ad_{\bar{\zg}}^{-1}(X)$.
Note the canonical decomposition
$$\sT P=P\ti\so(n)\ti\mathfrak{k}\,,$$
where $\mathfrak{k}$ is the orthogonal complement of $\so(n)$ in $\so(n+1)$ with respect to the trace scalar product.
Note that this decomposition is ${\SO(n)}$-independent.

The Lie algebroid $E_0=\mathrm{Lie}(\cG_0)\to S^n$ is $\sT P/\SO(n)\to S^n$.
% so the bundle associated with $P$ with respect to the action 0n $ identified with the  normal bundle, i.e. $S^n\ti\so(n+1)$.
Its sections are identified with $\SO(n)$-invariant vector fields on $P$.
It is clear that $F$ is $\SO(n)$-invariant,
$$\tr\left(2I-\bar{\zg}^{-1}A\bar{\zg}-\bar{\zg}^{-1}A^t\bar{\zg}\right)=\tr\left(\bar{\zg}^{-1}(2I-A-A^t)\bar{\zg}\right)=\tr(2I-A-A^t)\,,$$ so it defines a metric contrast function
$\hat{F}$ on the Atiyah groupoid
$$\cG_0=(P\ti P)/SO(n)\rightrightarrows S^n\,.$$

We can simplify the picture choosing one point of $P$, say $(x_0,r_0)=(e_1,\dots,e_{n+1})$ to identify $P$ with $\SO(n+1)$ and $\tilde{X}$ with right-invariant vector fields $X^r$ on $\SO(n+1)$. This time, however, the left-invariant vector fields $X^l$ represent $\SO(n)$-invariant vector fields on $P$, so sections of $E_0$. Moreover, the bracket in the Lie algebroid $E_0$ on sections $X^l$ agrees with the Lie bracket in $\so(n+1)$,
$$[X,Y]^l=[X^l,Y^l]\,.$$
Actually, the invariance of $g^F$ is with respect to the $\Ad$-action of the whole $\SO(n+1)$, so the metric (\ref{met})
induces a Riemannian metric on $SO(n+1)$ which is simultaneously left- and right-invariant.
There is  a canonical mapping $\tilde{\za}:\sT P\to \sT S^n$ , obtained from the submersion
$$\SO(n+1)\to\SO(n+1)/SO(n)\simeq S^n\,,$$
which induces the anchor map $\za:E_0\to\sT S^n$. The left invariant vector fields $X^l$ on $P=\SO(n+1)$ generate now the module $\Sec(E_0)$ as a module over $C^\infty(S^n)$-the $\SO(n)$-invariant functions on $P$. The anchor $\za(X^l)$ is the corresponding fundamental vector field of the canonical action of $\SO(n+1)$ on $\R^{n+1}$. Thus $X^l$ projects under $\za$ to a $\SO(n+1)$-invariant vector field $\za(X)$ on $S^n$. The kernel of this projection is generated by
$X^l$, where $X\in\SO(n)\subset\SO(n+1)$, so the anchor map identifies $\mathfrak{k}$ with $\sT_{e_1}S^n$.

Tu sum up: We can identify $\cG_0=(P\ti P)/SO(n)$ with $$\cG_0=(\SO(n+1)\ti\SO(n+1))/SO(n)=S^n\ti\SO(n+1)\,,$$ so that the reduced contrast function is
\be\label{recont}
\hat{F}(x,A)=\frac{1}{4}\tr(2I-A-A^t)\,,\ (x,A)\in S^n\ti\SO(n+1)\,.
\ee
The Lie algebroid $E_0=\mathrm{Lie}(\cG_0)\to S^n$ is $\sT P/\SO(n)\to S^n$ identified with the  normal bundle, i.e. $E_0=S^n\ti\so(n+1)$. Any constant section
$X(p)=X\in\so(n+1)$ represents a $\SO(n)$-invariant vector field $X^l$ on $P$ (left-invariant vector field on $\SO(n+1)$) and projects to a $\SO(n+1)$-invariant vector field on $S^n=P/\SO(n)$.
The reduced contrast function (\ref{recont}) induces on $E_0$ a metric $g^{\hat{F}}$ by
$$g^{\hat{F}}(X^l,Y^l)=-\frac{1}{2}\tr(XY)\,.$$ Moreover, $T^{\hat{F}}=0$ and the Lie algebroid Levi-Civita connection for $g^{\hat{F}}$ which satisfies
$$\nabla^{\hat{F}}_{X^l}Y^l=-\frac{1}{2}[X,Y]\,.$$
The connection is clearly torsionless. The full form of $\nabla^{\hat{F}}$ involves of course the anchor map. In particular, for $X,Y\in\so(n)$, $f,g\in C^\infty(S^n)$, we have
$$\nabla^{\hat{F}}_{fX^l}(gY^l)=\frac{1}{2}fg[X,Y]^l\,,$$
since the anchors $\za(X))==\za(Y)=0$ are trivial.
Due to invariance, the metric and the connection project to the sphere $S^n$. Using the base
$$v_i=\zd^1_i-\zd^i_1\in\mathfrak{k}\subset\so(n+1)\,,\ i=2,\dots,n+1$$ in $\mathfrak{k}$ corresponding to vectors
$e_2,\dots,e_{n+1}\in\sT_{e_1}S^n$, we see that
$$g^{\hat{F}}(v_i,v_j)=-\frac{1}{2}\tr(v_i,v_j)=\zd^i_j\,,$$
that shows that the metric induced by $g^F$ on the sphere is the standard Riemannian metric.
Hence, $\nabla^{\hat{F}}_{v_i^l}(v_j^l)=0$ and
$$\nabla^{\hat{F}}_{fv_i^l}(gv_j^l)=f\za(v_i^l)(g)v_j^l\,.$$
\end{example}
\end{proof}

\section{Infinite dimensions}\label{sec:infdim}
Our coordinate-free approach to stochastic manifolds has an additional advantage: it can be applied practically without changes in infinite-dimensional, say Banach manifold, frameworks. The differential calculus on Banach manifolds, in particular Banach-Lie groupoids, produces forms $g^F(x)$ as elements of $V^*\ti V^*$ for some Banach spaces $V$. This time, however, the non-degeneracy is a more delicate problem. This is due to the fact that Banach manifolds are generally not reflexive, the more not self-dual. In a weaker version, for non-degeneracy of $g^F(x)$ one can assume that the map $V\ni Y\to g^F(x)(Y,\cdot)\in V^*$ is an immersion, in a strong one, that it is an
isomorphism. The latter require of course that $V$ is self-dual, $V\simeq V^*$.

The best infinite-dimensional framework is therefore that of (real or complex) Hilbert spaces. Here is a nice example.
\begin{example}
For a Hilbert space $\cH$, on $\cH^\ti\ti \cH^\ti$, where $\cH^\ti=\cH\setminus\{ 0\}$, consider the two-point function
$$F(\zf,\zc)=1-\frac{|\bk{\zf}{\zc}|^2}{||\zf||^2\cdot||\zc||^2}\,.$$
It is easy to see that $F$ is a non-negative contrast function. Indeed, calculating the derivative with respect to $\zf$, we get
$$x^LF(\zf,\zc)=\frac{d}{dt}_{|t=0}F(\zf+tx,\zc)=\frac{2\re\bk{\zf}{x}|\bk{\zf}{\zc}|^2-2\re(\bk{x}{\zc}\bk{\zc}{\zf})||\zf||^2}
{||\zf||^4\cdot||\zc||^2}\,,$$
so that $x^LF(\zf,\zf)=0$. Now,
\beas x^Ly^LF(\zf,\zf)&=&\frac{d}{dt}_{|t=0}\frac{2\re\bk{\zf+ty}{x}|\bk{\zf+ty}{\zf}|^2-2\re(\bk{\zf}{x}\bk{\zf+ty}{\zf})||\zf+ty||^2}
{||\zf+ty||^4\cdot||\zf||^2}\\
&=& \frac{2\re\bk{x}{y}||\zf||^2-2\re(\bk{x}{\zf}\bk{y}{\zf})}{||\zf||^4}=g^F(\zf)(x,y)\,.\eeas
The 2-form $g^F$  is degenerated, but if we reduce by the action of $\C^\ti\ti\C^\ti$ (the contrast function $F$
is invariant with respect to $\C^\ti\ti\C^\ti$-action
$$(\zf,\zc)(z,z')=(z\zf, z'\zc)$$
on $\cH^\ti\ti\cH^\ti$), we obtain a Riemannian metric on the Hilbert projective space $\mathbb{P}\cH=\cH^\ti/\C^\ti$. This metric reads
$$\xd g(\zf)=\frac{2||x||^2\cdot||\zf||^2-2\bk{x}{\zf}|^2}{||\zf||^4}\,\xd x^2\,,$$
i.e. it is proportional to the Fubini-Study metric on $\mathbb{P}\cH$.

\end{example}

\begin{example} The groupoid of rank-one operators.
	
\noindent We consider $S(\mathcal{H})$, the unit sphere in $\mathcal{H}$, as a $U(1)$-principal bundle over the complex projective space $\mathbb{P}(\mathcal{H})$. Using normalized vectors 
\beas
\ket{\bar{\psi}}=\frac{\ket{\psi}}{\sqrt{\bk{\psi}{\psi}}},
\eeas
we construct transition probability amplitudes $\kb{\bar{\psi}}{\bar{\phi}}$ as elements of $S(\mathcal{H})\times S(\mathcal{H})$. Equivalence classes $e^{i\varphi}\kb{\bar{\psi}}{\bar{\phi}}e^{-i\varphi} $ are projected onto:
\begin{eqnarray}
s:\kb{\bar{\psi}}{\bar{\phi}}\mapsto\kb{\bar{\phi}}{\bar{\phi}}
&=&\frac{\kb{\phi}{\phi}}{\bk{\phi}{\phi}}, \nonumber \\
t:\kb{\bar{\psi}}{\bar{\phi}}\mapsto\kb{\bar{\psi}}{\bar{\psi}}
&=&\frac{\kb{\psi}{\psi}}{\bk{\psi}{\psi}}, \nonumber		
\end{eqnarray}
hence, the Atiyah groupoid $\cG(S(\mathcal{H}))$ projects onto the complex projective space represented by rank-one operators.
\end{example}

\section{Conclusions and outlook}

In previous sections we have argued that a groupoid approach to differential geometry of information theory is a more natural setting to deal with sub-manifolds of classical probability distributions. We have also considered the reduction problem of contrast functions which will be very useful in the quantum setting, where relative entropies will be invariant under the action of the unitary group.

As a matter of fact, a coordinate free approach to deal with the differential calculus required to derive metric and dual connections out of potential or contrast functions was introduced previously \cite{ciaglia18,ciaglia19,laudato18,manko17}, however there the introduction was by “ad hoc” methods, here it is intrinsic with the notion of Lie groupoid and its associated Lie algebroid. Moreover the notion of groupoid enters also naturally within the Schwinger approach to quantum mechanics \cite{ciaglia18a,ciaglia19,ibort13}. As the example provided in Section~\ref{sec:infdim} shows, it is possible to write a contrast function in quantum mechanics. The contrast function used there arises from an Atiyah groupoid, indeed it is possible to consider $S(\mathcal{H})$, the unit sphere in the Hilbert space as a $U(1)$-principle bundle over the complex projective space, then the groupoid $P\times P/U(1)$ has a space of “objects” (units) provided by rank-one projectors which represent the pure states, the “arrows” are transition probability amplitudes.

Thus in this quantum setting, we replace probabilities with probability amplitudes and transition probabilities with transition probability amplitudes. This replacement is crucial to be able to describe quantum interference phenomena as argued by Born in his Nobel acceptance speech. We are already familiar with the interpretation of “wave functions” as  probability amplitudes. This shift from probabilities to their “complex square root” allows to introduce also in quantum mechanics the language of groupoids to deal with contrast functions.

In a forthcoming paper we shall elaborate on the groupoid setting both in the Hilbert space approach and the C*-algebra approach to quantum mechanics to deal with contrast functions considered as  generalized relative entropies.

%
%\begin{theorem}
%
%\end{theorem}
%\begin{proof}
%
%\end{proof}

%\cite{Ciaglia2018}, \cite{ciaglia19}, \cite{Mackenzie2005}, \cite {Provost1980}

\section*{Acknowledgments}
J. Grabowski acknowledges research founded by the  Polish National Science Centre grant HARMONIA under the contract number 2016/22/M/ST1/00542. M. Ku\'s acknowledges financial support of the the Polish National Science Centre grant 2017/27/B/ST2/02959. G.~Marmo acknowledges financial support from the Spanish Ministry of Economy and Competitiveness, through the Severo Ochoa Programme for Centres of Excellence in RD(SEV-2015/0554), and would like to thank the support provided by the Santander/UC3M Excellence Chair Programme 2018/2019.

%\bibliographystyle{unsrt}
%
%\bibliography{contrast}

\vskip1cm
\noindent Katarzyna Grabowska\\\emph{Faculty of Physics,
University of Warsaw,}\\
{\small ul. Pasteura 5, 02-093 Warszawa, Poland} \\{\tt konieczn@fuw.edu.pl}\\

\noindent Janusz Grabowski\\\emph{Institute of Mathematics, Polish Academy of Sciences}\\{\small ul. \'Sniadeckich 8, 00-656 Warszawa,
Poland}\\{\tt jagrab@impan.pl}
\\

\noindent Marek Ku\'s\\
\emph{Center for Theoretical Physics, Polish Academy of Sciences,} \\
{\small Aleja Lotnik{\'o}w 32/46, 02-668 Warszawa,
Poland} \\{\tt marek.kus@cft.edu.pl}
\\

\noindent Giuseppe Marmo\\
\emph{Dipartimento di Fisica ``Ettore Pancini'', Universit\`{a} ``Federico II'' di Napoli} \\
\emph{and Istituto Nazionale di Fisica Nucleare, Sezione di Napoli,} \\
{\small Complesso Universitario di Monte Sant Angelo,} \\
{\small Via Cintia, I-80126 Napoli, Italy} \\
{\tt marmo@na.infn.it}
\\

\end{document}